\newtheorem{theorem}{Theorem}
\newtheorem{lemma}[theorem]{Lemma}
\newtheorem{defini}{Definition}
\def\QED{\ensuremath{{\square}}}
\def\markatright#1{\leavevmode\unskip\nobreak\quad\hspace*{\fill}{#1}}
\newenvironment{proof}
  {\begin{trivlist}\item[\hskip\labelsep{\bf Proof.}]}
  {\markatright{\QED}\end{trivlist}}
\newcommand{\conv}{\operatorname{Conv}}
\newcommand{\vol}{\operatorname{Area}}
\title{Balanced Islands in Two Colored Point Sets in the Plane\footnote{
Oswin Aichholzer and Birgit Vogtenhuber were supported by the ESF EUROCORES programme EuroGIGA--ComPoSe, Austrian Science Fund (FWF): I 648-N18.
Pablo Perez-Lantero was partially supported by projects CONICYT FONDECYT/Iniciaci\'on 11110069 (Chile), and Millennium Nucleus Information and Coordination in Networks ICM/FIC RC130003 (Chile).}}
\author{Oswin Aichholzer\thanks{Institute of Software Technology, Graz University of Technology, Graz, Austria.\texttt{[oaich|bvogt]@ist.tugraz.at}}
\and Nieves Atienza\thanks{Departamento de Matem\'atica Aplicada, Universidad de Sevilla, Spain. \texttt{natienza@us.es}}
\and Jos\'e M. D\'iaz-B\'a\~nez\thanks{Departamento de Matem\'atica Aplicada II, Universidad de Sevilla, Spain. \texttt{dbanez@us.es}}
\and Ruy Fabila-Monroy\thanks{Departamento de Matem\'aticas, Cinvestav, D.F., M\'exico. \texttt{ruyfabila@math.cinvestav.edu.mx}}
\and David Flores-Pe\~naloza\thanks{Departamento de Matem\'aticas, Facultad de Ciencias, UNAM, Mexico. \texttt{dflorespenaloza@gmail.com}}
\and Pablo P\'erez-Lantero\thanks{Departamento de Matem\'atica y Ciencia de
la Computaci\'on, Universidad de Santiago, Santiago, Chile. \texttt{pablo.perez.l@usach.cl}}
\and Birgit Vogtenhuber\footnotemark[2]
\and Jorge Urrutia\thanks{Instituto de Matem\'aticas, UNAM, M\'exico. \texttt{urrutia@matem.unam.mx}}}
\begin{document}
\maketitle

\begin{abstract}
Let $S$ be a set of $n$ points in general position in the plane, $r$ of which are red and $b$ of which are blue.
In this paper we prove that there exist:
for every $\alpha \in \left [ 0,\frac{1}{2} \right ]$, a convex set containing exactly $\lceil \alpha r\rceil$ red points
and exactly $\lceil \alpha b \rceil$ blue points of $S$; a convex set containing exactly $\left \lceil \frac{r+1}{2}\right \rceil$ red points
and exactly $\left \lceil \frac{b+1}{2}\right \rceil$ blue points of $S$.
Furthermore, we present polynomial time algorithms to find these convex sets.
In the first case we provide an $O(n^4)$ time algorithm  and
an $O(n^2\log n)$ time algorithm in the second case.
Finally, if $\lceil \alpha r\rceil+\lceil \alpha b\rceil$ is small, that is, not much larger
than $\frac{1}{3}n$, we improve the running time to $O(n \log n)$.
\end{abstract}

\newpage

\section{Introduction}

Let $S$ be a set of $n$ points in the plane, $r$ of which are red and $b$ of which are blue. 
Without loss of generality, we assume that $S$ (and any other finite point set in this paper) is in general position, that is, no three points lie on a common line. 
A large class of problems in Discrete and Computational Geometry involves partitioning such point sets.
A typical question in this context is whether a given 2-colored point set may be partitioned into parts that satisfy certain predefined properties.
In this paper, we present algorithms for computing convex sets that contain a balanced proportion of points of $S$ of each color.

The Ham Sandwich theorem states that there exists a straight line that simultaneously partitions the red points and the blue points in half. 
As a consequence, there exists a convex set  containing half of the red points and half of the blue points of $S$.
This result can be generalized as follows.

\begin{theorem}\label{thm:hobby}\textbf{(The Balanced Island Theorem)}
Let $S$ be a set of $r$ red points and $b$ blue points in the plane.
Then, for every  $\alpha \in \left [ 0,\frac{1}{2} \right ]$ there exists:
\begin{enumerate}
  \item a convex set containing exactly $\lceil \alpha r\rceil$ red points and exactly $\lceil \alpha b \rceil$ blue points of $S$;
  \item a convex set containing exactly $\left \lceil \frac{r+1}{2}\right \rceil$ red points and exactly $\left \lceil \frac{b+1}{2}\right \rceil$ blue points of $S$.
\end{enumerate}
\end{theorem}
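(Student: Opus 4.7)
The extreme cases are immediate: $\alpha=0$ is realised by the empty convex set, and $\alpha=\tfrac{1}{2}$ is the classical Ham-Sandwich theorem, which yields a half-plane containing exactly $\lceil r/2\rceil=\lceil \tfrac{1}{2}r\rceil$ red and $\lceil b/2\rceil=\lceil \tfrac{1}{2}b\rceil$ blue points. As $\alpha$ ranges over $[0,\tfrac{1}{2}]$, the target pair $(\lceil\alpha r\rceil,\lceil\alpha b\rceil)$ traces a monotone lattice path from $(0,0)$ to $(\lceil r/2\rceil,\lceil b/2\rceil)$ that changes by a unit in one coordinate at a time (apart from a few common-denominator coincidences). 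So the essence of part~(1) is to exhibit a single continuous family of islands---subsets $S\subseteq P$ with $\conv(S)\cap P=S$---whose red/blue counts visit every lattice point on this target path.

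My intended construction is to translate and rotate a directed line $\ell$ from a position with $P$ entirely on its right into a Ham-Sandwich position, absorbing points of $P$ into the left half-plane one at a time; at every instant the absorbed set is trivially an island, since it is separated from the rest of $P$ by $\ell$. Each absorption increases the count vector $(R,B)$ by either $(1,0)$ or $(0,1)$, and I would schedule the motion so that the resulting lattice path hugs the line $y=(b/r)x$: whenever the current slope $B/R$ exceeds $b/r$ the next absorbed point should be red, and otherwise blue. A short combinatorial argument (a discrete IVT along the lattice path) then shows that a path hugging $y=(b/r)x$ in this way must pass through every target pair $(\lceil\alpha r\rceil,\lceil\alpha b\rceil)$.

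The principal obstacle is to prove that this colour-guided schedule is always \emph{geometrically realisable}, i.e., that whenever the rule calls for the next absorbed point to be of a prescribed colour, there actually exists a combined translation--rotation of $\ell$ for which a point of that colour crosses first. I expect to handle this by a local perturbation in the dual arrangement: when the closest unabsorbed red and the closest unabsorbed blue are both near $\ell$, a small rotation about a pivot (a point we wish to keep outside) swaps their order along $\ell$; a careful bookkeeping should show that enough flexibility is always available before the line reaches its Ham-Sandwich position.

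For part~(2), note that $\lceil (r+1)/2\rceil=\lfloor r/2\rfloor+1$, so the statement demands a \emph{strict} majority of each colour, which the Ham-Sandwich theorem alone does not supply when $r$ or $b$ is even. My plan is to start from a Ham-Sandwich half-plane $H$ and, when $H$ is deficient in a colour by one point, bend the bounding line of $H$ into a convex polygonal boundary that reaches just far enough across to capture the nearest deficient-coloured point on the other side, without enclosing any other point of $P$. The crux is choosing the Ham-Sandwich cut so that the desired neighbour is "visible" from $H$, i.e., no point of $P$ blocks the convex extension; I expect this to require a specific choice of Ham-Sandwich line (for instance one through carefully chosen points of $P$) together with a short case analysis on the parities of $r$ and $b$.
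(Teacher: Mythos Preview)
Your approach to part~(1) has a fatal gap: half-planes are not enough. Take $r=b$, place the $r$ red points at the vertices of a regular $r$-gon, and cluster the $b$ blue points tightly at its centre. For $\alpha=1/r$ the target is $(1,1)$, but any half-plane containing even one blue point has its boundary essentially through the centre and therefore contains about $r/2$ red points. Hence no half-plane realises $(1,1)$, and your sweeping line---which only ever produces half-plane islands---cannot visit that lattice point no matter how the rotations are scheduled. The difficulty you single out (geometric realisability of the colour-guided schedule) is thus not the real obstacle; the search space itself is too small.

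The paper's proof of part~(1) uses convex \emph{wedges} (or strips), not half-planes. It invokes a theorem of Blagojevi\'c and Dimitrijevi\'c Blagojevi\'c on $(\alpha,\beta,\alpha)$-partitions by $3$-fans on $S^2$: after inflating the points to $\varepsilon$-disks and lifting to the sphere, that theorem produces a $3$-fan one of whose wedges carries an $\alpha$-fraction of each colour; letting $\varepsilon\to 0$ and projecting back yields the desired convex wedge. This step is genuinely topological (it rests on the non-existence of certain $\mathbb{D}_{2k}$-equivariant maps from a Stiefel manifold) and does not appear to admit an elementary line-sweep substitute.

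For part~(2) your plan is too sketchy to count as a proof. Bending the Ham-Sandwich boundary so as to capture exactly one extra point of a prescribed colour and nothing else requires that such a point be ``visible'' from the half-plane; you give no argument that some Ham-Sandwich line has this property, and for an arbitrary bisecting line it need not. The paper takes a different route and shows that a \emph{strip} suffices: projecting $P$ onto a directed line $\ell$ yields an ordering, which is encoded as a monotone lattice path in the $(r{+}1)\times(b{+}1)$ grid; one then argues, via a continuity argument as $\ell$ rotates through $\pi$, that this path must intersect its own translate by $\bigl(\lceil (r+1)/2\rceil,\lceil (b+1)/2\rceil\bigr)$. The intersection point corresponds to an interval of the projection with exactly the required counts, hence to a strip.
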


An \emph{island} of $S$ is a subset $I$ of $S$ such that $\conv(I)\cap S =I$.
The first case of Theorem~\ref{thm:hobby} implies in particular, that if $r=b$ then
for every $k=1,\dots, \lceil \frac{r}{2} \rceil$ there exists an island containing 
$k$ red points and $k$ blue points of $S$; such an island contains a balanced
number of red and blue points.  See Figure~\ref{fig:example} for an example.

On the other hand, consider the following construction.
Place $r$ red points at the vertices of a regular $r$-gon and place
$b$ blue points close to its center. Every convex set containing 
$\left \lceil \frac{r+1}{2}\right \rceil+1$ red points contains all the blue points.
The gap between $\left \lceil \frac{r}{2}\right \rceil$ and $\left \lceil \frac{r+1}{2}\right \rceil+1$
is filled by the second case of Theorem~\ref{thm:hobby}.

\begin{figure}
  \begin{center}
   \includegraphics[width=0.5\textwidth]{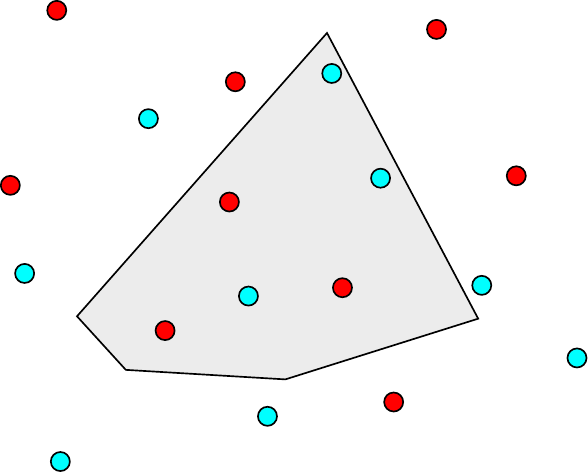}
\end{center}
\caption{An example of a balanced island for $r=b=9$ and $\alpha=1/3$.}
\label{fig:example}
\end{figure}

In higher dimensions the Ham Sandwich theorem states that the color 
classes of a $d$-colored finite set of points  in $\mathbb{R}^d$ can be 
simultaneously bisected by a hyperplane. Although the Ham Sandwich 
theorem can be easily proven in dimension two, already
for dimension three, tools from Algebraic Topology 
are needed. This is also the case for Theorem~\ref{thm:hobby}.
The first case of Theorem~\ref{thm:hobby} can be derived from
a theorem of Blagojevi\'c and Dimitrijevi\'c Blagojevi\'c (Theorem 3.2 in \cite{equivariant}), 
which is a prime example of the applications of Algebraic Topology in Combinatorial Geometry.
This implication was also noted by Sober\'on as a remark in his PhD thesis~\cite{pablo}. 

The connection between Algebraic Topology and Combinatorial Geometry
can sometimes be hard to understand without the proper background.
For the sake of self-containment, 
we include an expository account of this connection in Section~\ref{sec:topo}.

We prove Theorem~\ref{thm:hobby} in Section~\ref{sec:hobby}. In Lemma~\ref{lem:gen}
we show how to derive the first case of Theorem~\ref{thm:hobby} from the result
of \cite{equivariant};  the second case of Theorem~\ref{thm:hobby} needs a separate proof, 
which we give in Lemma~\ref{lem:n+1}; the argument used is also topological in essence.

Finally,  in Section~\ref{sec:algo}, we consider the algorithmic facet of Theorem~\ref{thm:hobby}. In 
 Theorem~\ref{thm:hobby_alg} we show that the convex set guaranteed by Theorem~\ref{thm:hobby} can
be found in $O(n^4)$ time in the first case, and in  $O(n^2\log n)$ time
in the second case. We also show in Theorem~\ref{thm:alg_2} that if 
$\lceil \alpha r\rceil+\lceil \alpha b \rceil$ is small, that is, not much larger than
$\frac{1}{3} n$, the running time  can be improved
to $O(n \log n)$.

\section{Topological preliminaries}\label{sec:topo}

\subsection{The Ham Sandwich and Borsuk-Ulam theorems}\label{sec:ham}

The statement that there exists a straight line simultaneously
bisecting the color classes of a two-colored finite point
set in the plane, is what many computational geometers would
recognize as the Ham Sandwich theorem. It generalizes
to higher dimensions as follows.

\begin{theorem} \textbf{(Discrete Ham Sandwich theorem).} \label{thm:ham_d}

Let $S_1, \dots, S_d$ be finite point sets in $\mathbb{R}^d$.
Then there exists a hyperplane that simultaneously bisects\footnote{Each of the two open half-spaces defined
by the hyperplane contain at most $\left \lfloor |S_i|/2 \right \rfloor$ points
of  $S_i$. } each $S_i$.
\end{theorem}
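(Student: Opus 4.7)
The plan is to derive Theorem~\ref{thm:ham_d} from the Borsuk--Ulam theorem applied to a suitably chosen continuous, antipodally odd map $f \colon S^d \to \mathbb{R}^d$. First, I would parametrize oriented affine hyperplanes in $\mathbb{R}^d$ by the unit sphere $S^d \subset \mathbb{R}^{d+1}$: to $a = (a_0, a_1, \ldots, a_d) \in S^d$ associate the oriented hyperplane $H_a = \{x \in \mathbb{R}^d : a_1 x_1 + \cdots + a_d x_d = a_0\}$ with positive open half-space $H_a^+ = \{x : a_1 x_1 + \cdots + a_d x_d > a_0\}$ and negative half-space $H_a^-$. The antipodal vector $-a$ then yields the same hyperplane with reversed orientation. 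Degenerate vectors with $a_1 = \cdots = a_d = 0$ correspond to ``hyperplanes at infinity'', interpreted as the whole space or the empty set depending on the sign of $a_0$; they cause no trouble because of the limiting argument below.

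Since the counting function $a \mapsto |P_i \cap H_a^+|$ fails to be continuous in $a$, a smoothing step is needed. I would replace every $p \in P_i$ by the uniform probability measure on a small ball $B(p, \varepsilon)$, let $\mu_i^{\varepsilon}$ be the sum of these measures over $p \in P_i$, and define $f^{\varepsilon} \colon S^d \to \mathbb{R}^d$ coordinate-wise by $f_i^{\varepsilon}(a) = \mu_i^{\varepsilon}(H_a^+) - \mu_i^{\varepsilon}(H_a^-)$. This map is continuous on $S^d$ and satisfies $f^{\varepsilon}(-a) = -f^{\varepsilon}(a)$, so by Borsuk--Ulam there exists $a_\varepsilon \in S^d$ with $f^{\varepsilon}(a_\varepsilon) = 0$; the hyperplane $H_{a_\varepsilon}$ simultaneously bisects all the smoothed measures $\mu_i^{\varepsilon}$.

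To recover the discrete statement, I would take a sequence $\varepsilon_k \to 0$ with corresponding vectors $a_k := a_{\varepsilon_k} \in S^d$ and extract a convergent subsequence $a_k \to a^*$ by compactness of $S^d$. Any point of $P_i$ lying strictly in $H_{a^*}^+$ also lies in $H_{a_k}^+$ for all sufficiently large $k$, and symmetrically for $H_{a^*}^-$; combining this with the measure-bisection property of $H_{a_k}$ and passing to the limit $k \to \infty$ yields that each open half-space of $H_{a^*}$ contains at most $\lfloor |P_i|/2 \rfloor$ points of $P_i$, which is exactly the required conclusion.

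The main obstacle is precisely the translation between the continuous (measure-theoretic) formulation, to which Borsuk--Ulam naturally applies, and the discrete counting statement of Theorem~\ref{thm:ham_d}: the non-continuity of $a \mapsto |P_i \cap H_a^+|$ forces the smoothing-plus-limiting detour, and one must check carefully that the combinatorial conclusion survives both the smoothing (so that for small enough $\varepsilon$ the bisecting property of $\mu_i^\varepsilon$ implies something meaningful about $P_i$ itself) and the passage to the limit (where some points of $P_i$ may end up on the hyperplane $H_{a^*}$ rather than strictly on one side).
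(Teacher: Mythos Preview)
Your proposal is correct and follows essentially the same strategy as the paper's sketch: replace each point by an $\varepsilon$-ball, apply Borsuk--Ulam to obtain a bisecting hyperplane for the smoothed sets, and pass to a limit as $\varepsilon \to 0$ via compactness. The only minor difference is the parametrization of hyperplanes: the paper singles out one set $U_d$, parametrizes the hyperplanes bisecting it by $S^{d-1}$, and applies Borsuk--Ulam on $S^{d-1}\to\mathbb{R}^{d-1}$, whereas you parametrize all oriented affine hyperplanes by $S^d\subset\mathbb{R}^{d+1}$ and apply Borsuk--Ulam on $S^d\to\mathbb{R}^d$; both routes are standard and your version avoids having to distinguish one set at the cost of handling the two degenerate ``poles'' (which, as you note, cannot arise as limits of bisectors when the $P_i$ are nonempty).
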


At first sight it might be hard to see the connection
of Theorem~\ref{thm:ham_d} with Topology. This connection
perhaps is more apparent in the following continuous version of Theorem~\ref{thm:ham_d}.

\begin{theorem} \textbf{(Continuous Ham Sandwich theorem).} \label{thm:ham_c}

Let $U_1, \dots, U_d$ be bounded open sets in $\mathbb{R}^d$.
Then there exists a hyperplane that simultaneously bisects\footnote{Each of the two open half-spaces defined
by the hyperplane contain half of the volume of $U_i$.} each $U_i$.
\end{theorem}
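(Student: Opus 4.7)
The plan is to reduce the claim to the Borsuk--Ulam theorem by encoding the space of oriented affine hyperplanes of $\mathbb{R}^d$ as the sphere $S^d\subset\mathbb{R}^{d+1}$. For $z=(v_0,v)\in S^d$ with $v_0\in\mathbb{R}$ and $v\in\mathbb{R}^d$, I would let $H^+(z)=\{x\in\mathbb{R}^d:\langle v,x\rangle\le v_0\}$ be the closed half-space bounded by the oriented hyperplane $\{x:\langle v,x\rangle=v_0\}$. The key observation is that $H^+(z)$ and $H^+(-z)$ together cover $\mathbb{R}^d$ and meet only along this hyperplane, a set of measure zero, so the parametrization is antipodally symmetric in the sense we need.

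Next I would define $f_i(z)=\vol(U_i\cap H^+(z))$ for $i=1,\dots,d$. By the previous observation, $f_i(z)+f_i(-z)=\vol(U_i)$, so setting $g_i(z)=f_i(z)-\tfrac12\vol(U_i)$ yields $g_i(-z)=-g_i(z)$. Continuity of $f_i$ is where one has to be a bit careful: as $z'\to z$ the symmetric difference $(U_i\cap H^+(z))\triangle(U_i\cap H^+(z'))$ has measure tending to zero by dominated convergence, using that $U_i$ is contained in a ball of finite volume. Hence $g=(g_1,\dots,g_d)\colon S^d\to\mathbb{R}^d$ is a continuous, antipodal (odd) map from a $d$-sphere to $\mathbb{R}^d$.

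The Borsuk--Ulam theorem then delivers a point $z^*\in S^d$ with $g(z^*)=0$, i.e.\ $f_i(z^*)=\tfrac12\vol(U_i)$ for every $i$. The oriented hyperplane encoded by $z^*$ therefore simultaneously bisects all the $U_i$, which is exactly the conclusion.

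The only subtlety is that the two points $(\pm1,0,\dots,0)\in S^d$ encode degenerate ``hyperplanes'' ($\mathbb{R}^d$ or the empty set) rather than honest affine hyperplanes. One must check that $g$ does not vanish there; this is automatic, since at $(+1,0,\dots,0)$ we get $g_i=\tfrac12\vol(U_i)$, and we may harmlessly assume at least one $U_i$ has positive volume. The main technical obstacle is thus not the topology but the continuity step and setting up the antipodal parametrization of oriented hyperplanes correctly; once these are in hand, Borsuk--Ulam finishes the argument in one stroke.
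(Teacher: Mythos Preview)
Your argument is correct and is one of the two standard derivations of the continuous Ham Sandwich theorem from Borsuk--Ulam; the paper, however, uses the other one. Rather than parametrizing all oriented hyperplanes by $S^d$, the paper singles out one set, say $U_d$, and for each direction $\vec v\in S^{d-1}$ selects the first oriented hyperplane orthogonal to $\vec v$ that already bisects $U_d$. This yields a configuration space homeomorphic to $S^{d-1}$, and the test map records, for each such hyperplane, the fraction of the volume of $U_1,\dots,U_{d-1}$ on one side; Borsuk--Ulam in the form ``there exists $x$ with $f(x)=f(-x)$'' then provides a hyperplane that, together with its antipode, bisects the remaining sets as well. Your route treats all the $U_i$ symmetrically and only needs the odd-map version of Borsuk--Ulam, at the small price of having to argue that the zero does not land on the two degenerate poles $(\pm 1,0,\dots,0)$. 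The paper's route spends one of the sets to cut the dimension by one and thereby avoids the poles altogether; it also dovetails with the Configuration Space--Test Map framework the paper goes on to develop in the next subsections.
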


We point out that Theorem~\ref{thm:ham_c} is usually
stated in the more general setting of finite Borel measures. To keep our 
exposition as self-contained as possible, we opted to use volumes
of open sets instead. 

The discrete version of the Ham Sandwich theorem can be proven using the 
continuous version. Given $S_1, \dots, S_d$ 
finite point sets in $\mathbb{R}^d$, the first step
is to replace each point with a ball of radius $\varepsilon >0$
centered at the point. The continuous Ham Sandwich theorem ensures that 
there exists a hyperplane that simultaneously bisects
these expanded $S_i$'s. If we let $\varepsilon$ tend to zero this hyperplane
converges to a hyperplane that simultaneously bisects the original $S_i$'s.

The continuous version of the Ham Sandwich theorem can be
proven using the Borsuk-Ulam theorem. The Borsuk-Ulam theorem has
many equivalent formulations. One of them 
states that for any map (continuous function) from the  $d$-dimensional sphere $S^d$
to $\mathbb{R}^d$, there exists a pair of antipodal points
with the same image. 

\begin{theorem} \textbf{(Borsuk-Ulam theorem A).} \label{thm:borsuk_p}

For every map $f:S^d \to \mathbb{R}^d$ there
exists a point $x \in S^d$ such that $f(x)=f(-x)$.
\end{theorem}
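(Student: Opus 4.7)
The plan is to reduce the statement to the equivalent assertion that no continuous odd map $S^d \to S^{d-1}$ exists. Define $g : S^d \to \mathbb{R}^d$ by $g(x) := f(x) - f(-x)$; this map is continuous and satisfies $g(-x) = -g(x)$, so it is odd. If the theorem fails, then $g$ is nowhere zero on $S^d$, so $h(x) := g(x) / \|g(x)\|$ defines a continuous odd (antipodal) map $h : S^d \to S^{d-1}$. The entire task then reduces to ruling out such an $h$.

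To rule out $h$, I would use degree theory. Compose $h$ with the equatorial inclusion $\iota : S^{d-1} \hookrightarrow S^d$ to obtain an odd map $\iota \circ h : S^d \to S^d$ whose image lies on the equator and hence avoids both poles. Any continuous map $S^d \to S^d$ that omits a point is null-homotopic and therefore has topological degree $0$. On the other hand, it is a classical fact that every continuous odd self-map of $S^d$ has odd degree, which contradicts degree $0$.

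The main obstacle is precisely this odd-degree fact, which is the genuine topological core of Borsuk-Ulam. My intended route is via $\mathbb{Z}_2$-equivariant simplicial approximation combined with Tucker's lemma: triangulate $S^d$ by a centrally symmetric simplicial complex, approximate $\iota \circ h$ by a simplicial odd map, pick a regular value $v$, and count its preimages modulo $2$. Antipodal symmetry pairs up the $d$-simplices of the triangulation in such a way that the parity of this count can be computed inductively, starting from the elementary fact that $S^0$ is disconnected while $S^d$ is connected for $d \geq 1$. For the concrete case $d = 1$, no heavy machinery is needed at all: it suffices to parametrize a great circle $\gamma : [0,1] \to S^1$ with $\gamma(1/2) = -\gamma(0)$ and apply the intermediate value theorem to $t \mapsto g(\gamma(t)) \cdot \mathbf{v}$ for a fixed direction $\mathbf{v}$, which illustrates in the simplest setting the interplay between continuity and the $\mathbb{Z}_2$-action that drives the full theorem.
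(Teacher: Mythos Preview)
The paper does not actually prove this theorem. Theorem~\ref{thm:borsuk_p} appears in the expository Section~\ref{sec:topo} as a classical black box; the paper merely states it, later restates the equivalent ``no antipodal map $S^d\to S^{d-1}$'' version as Theorem~\ref{thm:borsuk_n}, and refers the reader to Matou{\v{s}}ek's book~\cite{borsuk} for details. So there is no in-paper argument to compare against.

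That said, your outline is the standard route and is sound in structure. The reduction $g(x)=f(x)-f(-x)$ followed by normalization to obtain an odd $h:S^d\to S^{d-1}$ is exactly the passage from the paper's formulation~A to formulation~B. Your subsequent trick of composing with the equatorial inclusion $\iota:S^{d-1}\hookrightarrow S^d$ is correct: $\iota\circ h$ is odd because $\iota$ is, it misses both poles and is therefore null-homotopic with degree~$0$, while odd self-maps of $S^d$ have odd degree. You rightly identify this last fact as the real content; your sketched route via antipodally symmetric simplicial approximation and a parity count (essentially Tucker's lemma) is one of the standard proofs, and your $d=1$ remark via the intermediate value theorem is the usual base case. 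Nothing here is wrong, but be aware that what you call ``the main obstacle'' is in fact essentially the whole theorem, so the proposal is more a roadmap than a proof.
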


To prove the continuous version of the Ham Sandwich theorem, one first chooses a set
of a given family of bounded open sets $U_1,\dots, U_d$ of $\mathbb{R}^d$.
Say $U_d$ is chosen. For every possible direction $\vec{v} \in S^{d-1}$, consider the set of oriented
hyperplanes, orthogonal to $\vec{v}$, that bisect $U_d$. These hyperplanes form an interval
along this direction. Let $\Pi_{\vec{v}}$ be the first such hyperplane. 

It can be checked that the set $\{\Pi_{\vec{v}}:\vec{v} \in S^{d-1}\}$
is topologically equivalent (homeomorphic) to $S^{d-1}$. In this setting, 
pairs of antipodal points
correspond to pairs of oriented planes with parallel supporting planes and 
with opposite orientation, such that the volume of $U_d$ that
is contained between them is equal to zero. A map $f$ from this space 
of oriented planes to $\mathbb{R}^{d-1}$, is defined by 
mapping each such plane $\Pi$ to the point $f(\Pi) \in \mathbb{R}^{d-1}$ whose
$i$-th coordinate is the fraction of the volume of $U_i$ that lies above $\Pi$.
By the Borsuk-Ulam theorem, there exists a plane $\Pi$ that has
the same fraction of the volume of $U_i$ above it as its antipodal plane $-\Pi$ has. 
Therefore, $\Pi$ and $-\Pi$ simultaneously
bisect every $U_i$.

For more applications of the Borsuk-Ulam theorem see Matou{\v{s}}ek's book~\cite{borsuk}.

\subsection{Equivariant maps}\label{sec:eq_map}

The Borsuk-Ulam theorem as stated in Theorem~\ref{thm:borsuk_p} is
formulated in a positive way---it ensures the existence of a pair
of antipodal points of $S^d$ with a certain property.
It can also be formulated in a negative way---that no map 
from $S^d$ to $S^{d-1}$ with a certain property (antipodality) exists. A continuous function 
$f:S^{d} \to S^{d-1}$ is \emph{antipodal} if $f(-x)=-f(x)$ for all
$x \in S^d$. We explicitly give this negative formulation of the Borsuk-Ulam theorem.

\begin{theorem}  \textbf{(Borsuk-Ulam theorem B).} \label{thm:borsuk_n}

There is no antipodal map from $S^{d}$ to $S^{d-1}$.
\end{theorem}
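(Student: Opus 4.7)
The plan is to derive Theorem~\ref{thm:borsuk_n} as an easy consequence of Theorem~\ref{thm:borsuk_p}, which has already been stated in the excerpt. The two statements are in fact equivalent, but here I only need the direction ``positive $\Rightarrow$ negative''. The argument is a one-line contradiction: an antipodal map to $S^{d-1}$ would produce a map to $\mathbb{R}^d$ that agrees on some antipodal pair, forcing the value at that point to be $0$, which is impossible since its image lies on the unit sphere.

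More concretely, first I would argue by contradiction: assume there is a continuous $g\colon S^{d}\to S^{d-1}$ with $g(-x)=-g(x)$ for every $x\in S^{d}$. Next, I would use the standard inclusion $\iota\colon S^{d-1}\hookrightarrow \mathbb{R}^{d}$ of the unit sphere in Euclidean space, and consider the composition $f:=\iota\circ g\colon S^{d}\to \mathbb{R}^{d}$, which is continuous because both $g$ and $\iota$ are. Then I would apply Theorem~\ref{thm:borsuk_p} to $f$: there exists $x_{0}\in S^{d}$ with $f(x_{0})=f(-x_{0})$, i.e., $g(x_{0})=g(-x_{0})$.

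Finally, I would combine this equality with the antipodality hypothesis $g(-x_{0})=-g(x_{0})$ to get $g(x_{0})=-g(x_{0})$, hence $g(x_{0})=0\in\mathbb{R}^{d}$. But $g(x_{0})\in S^{d-1}$ has Euclidean norm $1$, a contradiction. This finishes the proof.

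There is essentially no obstacle here; the only subtlety worth flagging for a reader new to the subject is that the two versions of Borsuk--Ulam, Theorem~\ref{thm:borsuk_p} and Theorem~\ref{thm:borsuk_n}, really are equivalent: the converse direction (negative $\Rightarrow$ positive) would require the observation that if $f\colon S^{d}\to \mathbb{R}^{d}$ never satisfies $f(x)=f(-x)$, then the normalized map $x\mapsto (f(x)-f(-x))/\|f(x)-f(-x)\|$ is a well-defined antipodal map into $S^{d-1}$. Since the excerpt only requires the negative formulation to be stated as a consequence, I will not include this direction.
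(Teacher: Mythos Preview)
Your argument is correct and is the standard one-line derivation of the negative formulation from the positive one. Note, however, that the paper does not actually supply a proof of Theorem~\ref{thm:borsuk_n}: in Section~\ref{sec:eq_map} it is simply \emph{stated} as an equivalent (negative) formulation of Theorem~\ref{thm:borsuk_p}, treated as a classical fact, and then reinterpreted in the language of $\mathbb{Z}_2$-equivariant maps. So there is no ``paper's own proof'' to compare against; your proposal fills in the easy implication that the paper leaves implicit, and does so correctly.
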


Antipodality and the negative formulation of the Borsuk-Ulam theorem
are examples of a more general phenomena. In the case of antipodality,
 consider the map that sends every point $x \in S^d$
to its antipodal point. This map together with the identity on $S^d$
form a group under function composition. This group is isomorphic to the group $\mathbb{Z}_2$ 
(the unique group with two elements).
So $\mathbb{Z}_2$ is said to \emph{act} on $S^d$. The formal and more
general definition is the following.

\begin{defini}
An \textbf{action} of a group $G$ on a topological space $X$ is 
an assignment of an homeomorphism $\phi_g$ of $X$ to every
element $g$ of $G$, such that

\begin{itemize}

\item $\phi_e$ is the identity on $X$ if and only if $e$ is the identity element of $G$.

\item $\phi_g\phi_h=\phi_{gh}$ for all $g,h \in G$.

\end{itemize} 

\end{defini}

The \emph{action} of an element $g \in G$ on a point $x \in X$ is defined as the point
$\phi_g(x)$. If the action is already specified or implied,
we simply write $gx$. We say that $\phi$  is \emph{free}
if $\phi_g(x)=x$ for some $x \in X$ implies that $g=e$.
In other words, the only homeomorphism that
maps some point to itself is the one assigned
to the identity element. 

For a given group $G$ acting on two topological
spaces $X$ and $Y,$ a $G$-\emph{equivariant} map
is a map $f:X \to Y$
such that $f(gx)=gf(x)$ for all $x \in X$ and $g \in G$.
An antipodal map from $S^d$ to $S^{d-1}$ is just a 
$\mathbb{Z}_2$-equivariant map. The negative formulation of the
Borsuk-Ulam theorem can be reinterpreted as the statement that 
there is no $\mathbb{Z}_2$-equivariant map
from $S^d$ to $S^{d-1}$, where $\mathbb{Z}_2$ acts
freely on both $S^{d}$ and $S^{d-1}$.

The negative formulation of the Borsuk-Ulam theorem can
be used to prove the Ham Sandwich theorem in the following way.
Assume that there exists a family of bounded open sets $U_1,\dots,U_{d}$
of $\mathbb{R}^d$ for which there is no hyperplane that simultaneously
bisects all of them. To apply the negative version
of the Borsuk-Ulam theorem, we use this assumption to define
an antipodal ($\mathbb{Z}_2$-equivariant) map 
from $S^{d-1}$ to $S^{d-2}$; thus arriving to a contradiction.

We proceed in a similar way as when using the positive
formulation of the Borsuk-Ulam theorem. Again, we use the map 
$f$ from the set of oriented planes that bisect $U_d$ to $\mathbb{R}^{d-1}$.
(Recall that $f(\Pi)$ is the point of $\mathbb{R}^{d-1}$ whose 
$i$-th coordinate is the fraction of the volume of $U_i$ that lies above
the hyperplane $\Pi$.)
The image of $f$ is actually the $(d-1)$-dimensional cube $[0,1]^{d-1}$ rather
than all of $\mathbb{R}^d$; we will regard
$f$ as a map from $S^{d-1}$ to $[0,1]^{d-1}$. 

The topological space $[0,1]^{d-1}$ can be equipped with an ``antipodal''
function by mapping every point $(x_1,\dots, x_{d-1}) \in [0,1]^{d-1}$ to the point 
whose $i$-th coordinate is $1-x_i$. 
The assumption that there is no hyperplane simultaneously
bisecting all the $U_i$'s, is equivalent to the assumption
that no bisecting plane of $U_{d}$ is mapped to the point 
$p:=\left ( \frac{1}{2}, \frac{1}{2},\dots, \frac{1}{2} \right )$.
Actually, $f$ is a $\mathbb{Z}_2$-equivariant map from
$S^{d-1}$ to $[0,1]^{d-1} \setminus \{ p \}$. 

Let $x$ be a point
in $[0,1]^{d-1} \setminus \{ p \}$ and let $\gamma_x$ be the
infinite ray with apex $p$ that passes through $x$. 
Let $g$ be the map that sends $x$ to the intersection of $\gamma_x$ and
the boundary of $[0,1]^{d-1}$. The boundary of $[0,1]^{d-1}$ is homeomorphic
to $S^{d-2}$ and the antipodal function on $[0,1]^{d-1}$
defines a free $\mathbb{Z}_2$-action when restricted to it.
The function $g \circ f$ is the desired 
antipodal map from $S^{d-1}$ to $S^{d-2}$. 

The method just described to prove the Ham Sandwich
theorem is certainly more involved than the method described in 
Section~\ref{sec:ham}. However, it is this approach that has been streamlined
to prove many equipartition theorems in what has been
called the ``Configuration Space-Test Map'' (CS-TM) scheme. 
For a nice survey of the CS-TM scheme 
see {\v{Z}}ivaljevi{\'c}'s paper~\cite{user_guide}.

\subsection{Configuration Space-Test Map Scheme}

Suppose that we want to prove that an object with a certain
 property exists. (In the
previous example we searched for a hyperplane that simultaneously
bisects a given family $U_1,\dots, U_d $ of bounded
open sets of $\mathbb{R}^{d}$.)
The approach of the CS-TM scheme
is as follows.

A set of candidates for the solution is first defined. 
This set is then given a topology and a free action of a group $G$.
This space $X$ is called the \emph{configuration space}. 
In our previous example, $X$ was the space of oriented hyperplanes
that bisect $U_d$.

Afterwards, a map $f$ from $X$ to a space $Y$ is defined.  
The desired object is then shown to exist
if and only if some point of $X$ is mapped by $f$ to a certain
subspace $Z$ of $Y$. The space $Y$ and its subspace
$Z$ are called the \emph{test space}; the map $f$ is called
the \emph{test map}. In our previous examples
the roles of $Y, Z$ and $f$ where played by the boundary of $[0,1]^{d-1}, \{ p \}$, 
and $f$, respectively.

An action of $G$ on $Y$ is defined
so that $f$ is a $G$-equivariant map; it is also required that
$G$ acts freely on $Y \setminus Z$.
When restricted to $Y \setminus Z$, the map $f$ becomes a $G$-equivariant map
from $X$ to $Y \setminus Z$. Finally, one shows that no such maps can
exist---contradicting the assumption that our desired object does not exist.
This last part of proving the non-existence of equivariant maps
is where tools from Algebraic Topology typically come into play.

\subsubsection{Blagojevi\'c's and Dimitrijevi\'c Blagojevi\'c's Theorem under the CS-TM Scheme}\label{sec:k_fan}

The theorem of Blagojevi\'c and Dimitrijevi\'c Blagojevi\'c
is an equipartition result of $3$-fans on the sphere; 
we now illustrate how their proof fits into the CS-TM scheme.
For the following definitions regarding $k$-fans we follow the exposition
in B\'ar\'any's and Matou{\v{s}}ek's paper~\cite{k-fan}.
The study of equipartition results using $k$-fans was initiated
by Akiyama Kaneko, Kano, Nakamura, Rivera-Campo, Tokunaga and Urrutia
in \cite{jorge}.

A \emph{$k$-fan} in the plane is a set of $k$ infinite rays,
that emanate from the same point. This point is called 
its \emph{apex}. A $k$-fan in the plane can also be a set of $k$ parallel
lines. Given a $k$-fan $\gamma$ in the plane, we call the connected
open regions of $\mathbb{R}^2 \setminus \gamma$ 
the \emph{wedges} of $\gamma$. In the case where 
$\gamma$ consists of parallel lines, a wedge is also the union of the two open regions of  
$\mathbb{R}^d \setminus \{\gamma\}$ that are bounded by a single
line.

The inclusion of $k$ parallel lines in the definition
of $k$-fans and the last exception in the definition of its wedges
may seem awkward. However, $k$-fans consisting of rays
and $k$-fans consisting of parallel lines are closely related.
This connection will become clear once we consider
$k$-fans in the sphere and their connection with $k$-fans in the plane.

A \emph{$k$-fan} in the two dimensional \emph{sphere} $S^2$ is a set
of $k$ great semicircles that emanate from the same two antipodal points. (Recall that $S^2$
is a \emph{two}-dimensional surface, but it is normally regarded as embedded in $\mathbb{R}^3$.)
These two points are called its apices.  Given a $k$-fan $\gamma$ in $S^2$, we call the connected
open regions of $S^2 \setminus \gamma$ the \emph{wedges} of $\gamma$. See Figure~\ref{fig:wedge}.

\begin{figure}
  \begin{center}
   \includegraphics[width=0.5\textwidth]{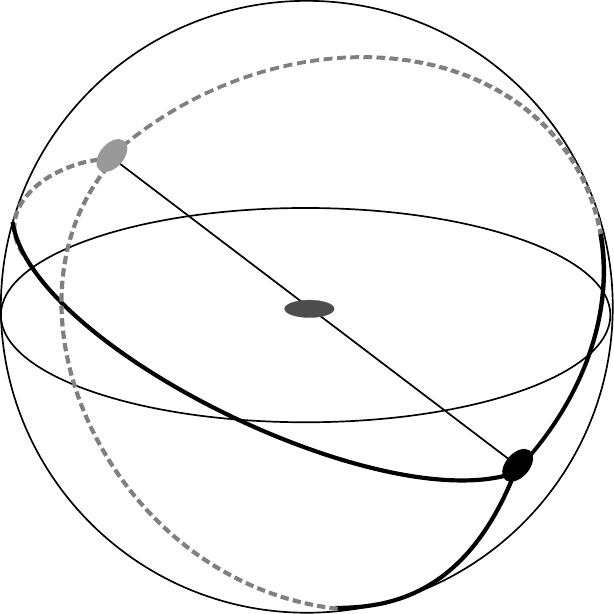}
\end{center}
\caption{A $3$-fan on $S^2$.}
\label{fig:wedge}
\end{figure}
The connection between $k$-fans in the sphere
and $k$-fans in the plane is given by the following map from
the open southern hemisphere of $S^2$ to $\mathbb{R}^2$.
Assume that $S^2$ lies on $\mathbb{R}^3$ and identify
$\mathbb{R}^2$ with a horizontal plane lying below $S^2$.
From the center of $S^2$ project every point on the southern
hemisphere of $S^2$ to $\mathbb{R}^2$. Let $\pi$ be this
map. The image under $\pi$ of a $k$-fan for which all its great semicircles
intersect the open southern hemisphere of $S^2$ is a $k$-fan
in $\mathbb{R}^2$. If the apices of this $k$-fan are on the
equator of $S^2$ then the image of this $k$-fan corresponds
to a set of $k$ parallel lines in $\mathbb{R}^2$. Conversely
the preimage under $\pi$ of a $k$-fan in $\mathbb{R}^2$ corresponds
to a $k$-fan in $S^2$ for which all its semicircles 
intersect the open southern hemisphere. 
This connection between $k$-fans in the sphere and in the plane
allows us to translate partition results by $k$-fans
in the sphere to partition results by $k$-fans in the plane. 

Let $R$ and $B$ be two finite Borel measures on $S^2$ and
$\alpha, \beta >0$ be two real numbers such that $2\alpha+\beta=1$. 
The result of \cite{equivariant} states that there exists
a $3$-fan such that two of its wedges have an $\alpha$ proportion
of $R$ and $B$, and the remaining wedge contains a $\beta$
proportion of $R$ and $B$.  Keeping up with our convention we reformulate 
that statement in terms of areas of open sets. 

\begin{theorem}\label{thm:fan}\textbf{(Theorem 3.2 in \cite{equivariant})}
Let $R$ and $B$ be two open sets on $S^2$, and let
$\alpha, \beta >0$ be two real numbers such that $2\alpha+\beta=1$. 
Then there exists a $3$-fan on $S^2$ such that its corresponding
wedges $W_1, W_2$ and $W_3$ satisfy:
\begin{itemize}

\item $\vol(W_1 \cap R)/\vol(R)=\vol(W_1 \cap B)/\vol(B)=\alpha$,
\item $\vol(W_2 \cap R)/\vol(R)=\vol(W_2 \cap B)/\vol(B)=\beta$ and
\item $\vol(W_3 \cap R)/\vol(R)=\vol(W_3 \cap B)/\vol(B)=\alpha$.
\end{itemize}
\end{theorem}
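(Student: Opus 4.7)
The plan is to apply the Configuration Space--Test Map scheme of Section~\ref{sec:eq_map}. Take as configuration space
\[
X := \{(v,q_1,q_2,q_3) : v \in S^{2},\ q_1,q_2,q_3 \in S^{1}_{v}\ \text{pairwise distinct}\},
\]
where $S^{1}_{v}\subset S^{2}$ denotes the equator perpendicular to $v$. Each point of $X$ records a $3$-fan whose apex is $\pm v$ and whose three bounding great semicircles meet the equator at $q_1,q_2,q_3$. Label the wedges so that $W_{2}$ is the one bordered by the semicircles through $q_1$ and $q_3$ that avoids the semicircle through $q_2$, and let $W_{1},W_{3}$ be the remaining two. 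The involution $\tau(v,q_1,q_2,q_3):=(-v,q_3,q_2,q_1)$ preserves the underlying fan and the wedge $W_{2}$ while swapping $W_{1}\leftrightarrow W_{3}$, and it acts freely on $X$ because it negates $v$.

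Writing $r_i:=\vol(W_i\cap R)/\vol(R)$ and $b_i:=\vol(W_i\cap B)/\vol(B)$, the identities $\sum r_i=\sum b_i=1$ together with $2\alpha+\beta=1$ reduce the theorem to the four equations $r_1=r_3=\alpha$ and $b_1=b_3=\alpha$. I would therefore introduce the test map
\[
F:=(r_1-r_3,\,b_1-b_3,\,r_1+r_3-2\alpha,\,b_1+b_3-2\alpha)\colon X\to\mathbb{R}^{4},
\]
which is $\tau$-equivariant when $\mathbb{R}^{4}$ is equipped with the action $(a,b,c,d)\mapsto(-a,-b,c,d)$. A zero of $F$ is precisely a $3$-fan of the desired type, so assuming the theorem fails yields a $\tau$-equivariant map $X\to\mathbb{R}^{4}\setminus\{0\}$.

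The crux is the topological nonexistence step. The target splits as the direct sum of the sign representation $V=\mathbb{R}^{2}$ (on which $\tau$ acts by $-1$) and the trivial representation $W=\mathbb{R}^{2}$; the induced action on $\mathbb{R}^{4}\setminus\{0\}$ is not free, since its fixed set is $W\setminus\{0\}$, so the classical Borsuk--Ulam theorem does not apply directly. I would rule out the equivariant map via a primary obstruction argument: the equivariant Euler class of the rank-$2$ bundle $X\times_{\mathbb{Z}_2}V\to X/\tau$ is exactly the obstruction to a nowhere-zero equivariant section, so a contradiction follows once this class is shown to be nonzero. Computing it requires an explicit description of $X/\tau$ as a fibration over $\mathbb{R}P^{2}$ whose fiber is the ordered configuration space of three points on $S^{1}$, followed by a Stiefel--Whitney class calculation in $H^{*}(X/\tau;\mathbb{F}_2)$. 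This cohomological computation is the main obstacle I anticipate. Once nonvanishing is established, a standard approximation argument transfers the conclusion from open sets to arbitrary finite Borel measures, recovering Theorem~3.2 of~\cite{equivariant} in the generality stated there.
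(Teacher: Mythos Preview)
Your CS--TM setup has a genuine gap at the obstruction step. You claim that the nonvanishing of the Euler class of the rank-$2$ bundle $X\times_{\mathbb{Z}_2}V\to X/\tau$ would yield a contradiction, but that class is only the obstruction to the $V$-component $(r_1-r_3,\,b_1-b_3)$ of $F$ being everywhere nonzero. Even if it is nonzero, $F$ itself can still miss the origin: on the locus where the $V$-part vanishes, the $W$-part $(r_1+r_3-2\alpha,\,b_1+b_3-2\alpha)$ is completely unconstrained by equivariance, since $\tau$ acts trivially on $W$. Equivalently, the Euler class of the full rank-$4$ bundle is $e(V)\smile e(W)=0$ because the trivial summand has trivial Euler class; hence $\mathbb{Z}_2$-equivariant maps $X\to (V\oplus W)\setminus\{0\}$ certainly exist, and equivariance of your specific $F$ alone cannot force a zero. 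A single involution is simply too weak for four independent conditions.

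The paper's account (following Blagojevi\'c and Dimitrijevi\'c Blagojevi\'c) is organised quite differently and is designed precisely to manufacture a large symmetry group. One first reduces to rational $\alpha=a_1/k$, $\beta=a_2/k$ and works with $k$-fans rather than $3$-fans; one then restricts the configuration space to $k$-fans that already equipartition $R$, which collapses it to the Stiefel manifold $V_2(\mathbb{R}^3)$ and leaves only the $B$-conditions to be enforced by the test map into $Y=\{x\in\mathbb{R}^k:\sum x_i=0\}$. The crucial gain is that the full dihedral group $\mathbb{D}_{2k}$ now acts freely on the configuration space and compatibly on the test space, and it is this rich symmetry that makes the nonexistence of an equivariant map $V_2(\mathbb{R}^3)\to Y\setminus Z$ provable---the paper itself stresses that this last step is the heart of the matter and defers it to~\cite{equivariant}. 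If you want to repair your direct $3$-fan approach you would need either a substantially larger symmetry (which the $3$-fan with a distinguished middle wedge does not have) or an argument that exploits the specific geometry of $F$ beyond its $\mathbb{Z}_2$-equivariance; the Euler-class computation you outline cannot do the job.
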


The proof of Theorem~\ref{thm:fan} in~\cite{equivariant} is done 
when $\alpha$ and $\beta$ are rational numbers. Afterwards, using standard
arguments it is shown that Theorem~\ref{thm:fan} holds when $\alpha$ and $\beta$ 
are real numbers. So assume that $\alpha$ and $\beta$ are rational numbers.
Let $k, a_1, a_2$ be natural numbers
such that $\alpha=\frac{a_1}{k}$ and $\beta=\frac{a_2}{k}$. Note that
$2a_1+a_2=k$.

\subsubsection*{Configuration Space}

We consider $k$-fans rather than $3$-fans. 
We need to ``orient'' the set of $k$-fans, similar to as 
we did with the halving planes in Section~\ref{sec:ham}.
Let $\gamma$ be a $k$-fan in $S^2$.  Recall that $\gamma$
has two apices, and $k$ great semicircles. The orientation is given 
by choosing a tuple $(x,C)$ where $x$ is one of its apices, and $C$ is 
one of its great semicircles.
 Once an orientation $(x,C)$ is chosen for $\gamma$,
the $k$ great semicircles $C_1,\dots, C_k$ of $\gamma$ are assumed to 
be sorted counterclockwise around $x$, with $C_1=C$ being the first
one. For $1 \le i \le k-1$, let $\gamma_i$ be the wedge of $\gamma$ 
bounded by $C_i$ and $C_{i+1}$, and let $\gamma_k$ be the wedge bounded by
$C_k$ and $C_1$.

Let $M$ be the set of all $k$-fans that equipartition $R$. 
That is, all $k$-fans $\gamma$ such that
$\vol(\gamma_i \cap R)=\vol(R)/k$ for $i=1,\dots, k$.
Let $\gamma \in M$ be such a $k$-fan and let $\ell$ be the straight line
through its two apices. Note that each semicircle
$C_i$ of $\gamma$ can be moved around a maximal interval $I_i$ (of possibly one point) of great semicircles around $\ell$,
so that $\vol(\gamma_i \cap R)$ does not change. If $I_i$ has more than one point, define $\varphi_i$ as the wedge bounded
by the great semicircles at
the endpoints of $I_i$. In particular, if $I_i$ has more than one point then $\vol(\varphi_i \cap R)$ is equal to zero.
Let $\gamma^*$ be the $k$-fan that is obtained from $\gamma$ by placing each $C_i$ at the midpoint
of $I_i$. Note that $\gamma^* \in M$. 

The configuration space is the subset of $M$, given by $M^*:=\{\gamma^*:\gamma \in M\}$.
We parametrize this space by assigning a pair of orthogonal unit vectors to each $\gamma \in M^*$ as follows. 
Assume that $S^2$ is of radius one.  Assign to $\gamma$ the tuple $(\vec{x}, \vec{y})$
where $\vec{x}$ is the unit vector with endpoint at $x$, and $\vec{y}$ is the
unit vector orthogonal to $\vec{x}$ whose endpoint lies in $C_1$.
Note that since $\gamma$ equipartitions $R$, and $\gamma$ is in $M^*$, $x$ and $C_1$ determine
all of $\gamma$. Thus, a tuple $(\vec{x}, \vec{y})$ of orthogonal 
vectors corresponds to
exactly one of these $k$-fans. 
With this correspondence in mind, the 
configuration space is the space, $V_2(\mathbb{R}^3)$,
 of  all tuples of unit orthogonal
vectors in $\mathbb{R}^d$. (In the literature $V_2(\mathbb{R}^3)$
is known as the Stiefel manifold~\cite{stiefel}.) 

We specify a group action on $V_2(\mathbb{R}^3)$. Let $\phi$ be the homeomorphism
of $V_2(\mathbb{R}^3)$ that sends $\gamma$ to $(-x,C)$, and let $\psi$
be the homeomorphism that sends $\gamma$ to $(x,C_2)$. The
group generated by $\phi$ and $\psi$ is isomorphic 
to the dihedral group $\mathbb{D}_{2k}$---the
group of symmetries of the regular $k$-gon. This is clear
once one realizes that $\phi$ corresponds to a reflection of the regular 
$k$-gon, and that $\psi$ corresponds to a clockwise rotation 
of the $k$-gon by one. We assume that $\mathbb{D}_{2k}$ acts
on $V_2(\mathbb{R}^3)$ via $\psi$ and $\phi$; note that this
action is free.

\subsubsection*{Test Space and Test Map}

Let $Y$ be the linear subspace of $\mathbb{R}^k$ defined
by 

\[Y:=\{(x_1,\dots,x_k) \in \mathbb{R}^k:x_1+\dots+x_k=0\}.\]
Let $Z$ be the subspace of $Y$ defined by the equations

\begin{IEEEeqnarray*}{lCll}
  x_1 & +\dots+ & x_{a_1} & = 0,\\
  x_{a_1+1} & +\dots+& x_{a_1+a_2} &  = 0, \\
  x_{a_1+a_2+1} & +\dots+& x_{k} &  = 0.
\end{IEEEeqnarray*}

Let $f:V_2(\mathbb{R}^3) \to Y$ be the map defined by

\[ f(\gamma)=(\vol(\gamma_1\cap B)-\vol(B)/k,\dots, \vol(\gamma_k \cap B)-\vol(B)/k).\]

Let $\gamma'$ be the $3$-fan with apex $x$ and with great semicircles $C_1, C_{a_1}$ and $C_{a_1+a_2}$.
Note that if $f(\gamma) \in Z$ then:

\begin{IEEEeqnarray*}{cccccc}
 \vol(\gamma_1' \cap B)/ \vol(B) & = & \sum_{i=1}^{a_1} \vol(\gamma_i \cap B)/\vol(B)~       & = & ~\alpha, &\\ 
 \vol(\gamma_2' \cap B)/ \vol(B) & = & \sum_{i=a_1+1}^{a_1+a_2} \vol(\gamma_i \cap B)/\vol(B)~ & = & ~\beta, & \textrm{ and} \\
 \vol(\gamma_3' \cap B)/  \vol(B)& = & \sum_{i=a_1+a_2}^{k} \vol(\gamma_i \cap B)/ \vol(B)~   & = & ~\alpha.& 
\end{IEEEeqnarray*}

In this case, $\gamma'$ is the desired $3$-fan of Theorem~\ref{thm:fan}. 
We now equip $Y$ with a free $\mathbb{D}_{2k}$-action. Let $\phi'$ be the homeomorphism
of $Y$ that sends $(x_1, x_2,\dots, x_{k-1}, x_k)$ to $(x_k, x_{k-1},\dots, x_2, x_1)$, and
let $\psi'$ be the homeomorphism of $Y$ that sends $(x_1,x_2,\dots, x_k)$ to 
$(x_2,\dots,x_k, x_1)$. The group generated by $\phi'$ and $\psi'$ is isomorphic to 
$\mathbb{D}_{2k}$. We assume that $\mathbb{D}_{2k}$ acts on $Y$ via $\phi'$ and
$\psi'$. Note that this action is free and $f$ is a $\mathbb{D}_{2k}$-equivariant
map from $V_2(\mathbb{R}^3)$ to $Y$. If no $k$-fan is mapped by $f$ to $Z$ then $f$ 
is an equivariant map from $V_2(\mathbb{R}^3)$ to $Y \setminus Z$. In \cite{equivariant},
Blagojevi\'c and Dimitrijevi\'c Blagojevi\'c prove that no such
map exists. The proof of this last part is far from trivial. Indeed, it is
the gist of the proof of Theorem~\ref{thm:fan}--we have merely presented the prelude.
Unfortunately, a detailed account of this is
 beyond an expository account.

\section{Proof of the Balanced Island Theorem}\label{sec:hobby}
We are now ready to prove Theorem~\ref{thm:hobby}. We show the first
case in Lemma~\ref{lem:gen} and the second case in Lemma~\ref{lem:n+1}.

A way to prove partition theorems on points sets
from similar partition theorems on finite Borel measures (or areas of open sets in our case)
is the following.
First enlarge each point to a disk of radius $\varepsilon>0$, and use the measure
theorem to find a solution. Then let $\varepsilon$ tend to zero and show
that the limit of these solutions exists and that it is  the desired
solution for point sets. We applied this approach 
in Section~\ref{sec:ham}, when we sketched how to obtain the
 discrete version (Theorem~\ref{thm:ham_d}) of the Ham Sandwich theorem
from its continuous version (Theorem~\ref{thm:ham_c}). We follow
this approach again in the proof of Lemma~\ref{lem:gen}.

\begin{lemma}\label{lem:gen} 

Let $S$ be a set of $r$ red points and $b$ blue points
in the plane. Then for every $\alpha \in \left [ 0,\frac{1}{2} \right ]$
there exists a convex set containing exactly $\lceil \alpha r\rceil$ red points
and exactly $\lceil \alpha b \rceil$ blue points of $S$. Moreover, this convex
set is either a convex wedge or a strip.
\end{lemma}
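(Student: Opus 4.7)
The plan is to apply Theorem~\ref{thm:fan} to an $\varepsilon$-disk thickening of $P$ and then extract the desired convex set from a limit of the guaranteed $3$-fans. The endpoint cases are easy: for $\alpha=0$ I would take the empty set, and for $\alpha=1/2$ a halfplane produced by the planar Ham Sandwich Theorem yields the required counts. So assume $\alpha\in(0,1/2)$ and set $\beta=1-2\alpha>0$. For every $\varepsilon>0$, I replace each point of $P$ by an open disk of radius $\varepsilon$ centered at it, obtaining open sets $R_\varepsilon$ and $B_\varepsilon$ of total volumes $r\pi\varepsilon^2$ and $b\pi\varepsilon^2$. Theorem~\ref{thm:fan}, transferred to the plane via the central projection of Section~\ref{sec:k_fan}, then supplies a planar $3$-fan $\gamma_\varepsilon$ whose wedges $W_1^\varepsilon,W_2^\varepsilon,W_3^\varepsilon$ contain the fractions $\alpha,\beta,\alpha$ of both $\vol(R_\varepsilon)$ and $\vol(B_\varepsilon)$.

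Next, I would use compactness of the Stiefel parameterization of planar $3$-fans (with parallel lines allowed as a degenerate limit) to pass to a subsequence $\gamma_{\varepsilon_n}$ converging to a limit $3$-fan $\gamma$ with wedges $W_1,W_2,W_3$. The three wedge angles of an ordinary planar $3$-fan sum to $2\pi$, so at most one of $W_1,W_2,W_3$ has angle $\ge\pi$; in the parallel-lines case, exactly one of the three wedges is the non-convex ``outer'' set while the other two are strips. Either way, at least one of the two $\alpha$-wedges $W_1$ and $W_3$ is itself a convex wedge or a strip, and I would call this one $W$.

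Let $r_1,b_1$ denote the numbers of red and blue points of $P$ strictly inside $W$, and let $c_r,c_b$ denote those lying on its boundary $\partial W$. For every sufficiently small $\varepsilon_n$ in the subsequence, the disks entirely inside $W^{\varepsilon_n}$ are precisely those centered at the $r_1+b_1$ points strictly inside $W$, while the disks cut by $\partial W^{\varepsilon_n}$ are precisely those centered at the $c_r+c_b$ boundary points. The area identity $\vol(W^{\varepsilon_n}\cap R_{\varepsilon_n})=\alpha r\pi\varepsilon_n^2$ (and its blue analogue) then forces
\[r_1\le\alpha r\le r_1+c_r, \qquad b_1\le\alpha b\le b_1+c_b,\]
so $k_r:=\lceil\alpha r\rceil-r_1$ and $k_b:=\lceil\alpha b\rceil-b_1$ are nonnegative integers bounded above by $c_r$ and $c_b$ respectively.

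To finish, I would select $k_r$ red and $k_b$ blue points from $\partial W\cap P$ and adjoin them to $W$, together with the closed segment between any two selected points lying on the same connected component of $\partial W$. General position (no three collinear points in $P$) guarantees that these segments contain no further points of $P$, so the resulting set has exactly $r_1+k_r=\lceil\alpha r\rceil$ red and $b_1+k_b=\lceil\alpha b\rceil$ blue points of $P$. Convexity holds case-by-case: segments inside $W$ stay in $W$; segments from a point in $W$ to a boundary point have interior in $W$; segments between two selected points on the same component of $\partial W$ are included by construction; and segments between selected points on different components of $\partial W$ have interior in $W$. The main delicate step is the transition from area proportions to the count inequalities above, which relies on the convergence $\gamma_{\varepsilon_n}\to\gamma$ and the finiteness of $P$ to guarantee that for small $n$ the only disks cut by $\partial W^{\varepsilon_n}$ are those centered on $\partial W$.
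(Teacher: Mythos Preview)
Your argument follows the same route as the paper: $\varepsilon$-thicken, apply Theorem~\ref{thm:fan}, extract a limit $3$-fan via compactness of the Stiefel manifold, pick the convex $\alpha$-wedge, and then adjust for boundary points. The only real divergence is in that last step: the paper perturbs $W$ slightly to absorb or discard the needed boundary points, whereas you glue boundary segments onto $W$. Your set $S=W\cup C_1\cup C_2$ is indeed convex and has the correct counts, but it is not itself a wedge or a strip, so the ``moreover'' clause---on which the algorithms of Section~\ref{sec:algo} rely---is not established; replacing your segment-gluing by a small perturbation of the bounding rays (and of the apex, should two points of $P$ happen to be collinear with it) yields a genuine wedge or strip containing the same points.

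One minor side remark: your assertion that for small $\varepsilon_n$ the disks cut by $\partial W^{\varepsilon_n}$ are \emph{precisely} those centered on $\partial W$ need not hold, since whether a boundary disk is cut depends on the relative rates at which $\varepsilon_n\to 0$ and $W^{\varepsilon_n}\to W$. This does not matter, though: each boundary disk contributes between $0$ and $\pi\varepsilon_n^2$, which is all you need for the inequalities $r_1\le\alpha r\le r_1+c_r$ and $b_1\le\alpha b\le b_1+c_b$.
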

\begin{proof}

Assume without loss of generality that $\alpha$ and $\beta$ are rationals.
We project the open southern hemisphere of $S^2$ to $\mathbb{R}^2$ 
as in Section~\ref{sec:k_fan}. Assume that $S^2$ lies on $\mathbb{R}^3$ and identify
$\mathbb{R}^2$ with an horizontal plane lying below $S^2$. 
From the center of $S^2$ project every point on the southern
hemisphere of $S^2$ to $\mathbb{R}^2$. Let $\pi$ be this map.

Let $\varepsilon > 0$.
On every red point 
place an open disk of radius $\varepsilon$ centered at this point; 
let $R_\varepsilon$ be the union of all these disks. Likewise, 
on every blue point  place an open disk of radius $\varepsilon$ centered at this point; 
let $B_\varepsilon$ be the union of all these disks. 
Let $\gamma_\varepsilon$ be the $3$-fan of $S^2$ given by 
Theorem~\ref{thm:fan} for $R:=\pi^{-1}(R_\varepsilon)$, $B:=\pi^{-1} (B_\varepsilon)$, $\alpha$ and 
$\beta:=1-2\alpha$. Choose $\varepsilon$ small enough so that  $R$ and $B$ lie on the southern hemisphere of $S^2$;
if necessary, perturb $\gamma_\varepsilon$ so that the three semicircles
of $\gamma_\varepsilon$ intersect the southern hemisphere of $S^2$. Note that $\pi(\gamma_\varepsilon)$
is a $3$-fan in $\mathbb{R}^2$. 

The $3$-fan $\gamma_\varepsilon$ defines three wedges: two of which contain
an $\alpha$ proportion of the area of $R$ and $B$; the remaining wedge, $W_\beta$,
contains a $\beta$ proportion of the area $R$ and $B$. Assume that 
$\gamma_\varepsilon$ is oriented so that $W_\beta$ is bounded
 by the first and the second semicircle. At least one of the two wedges that contain
an $\alpha$ proportion of the area of $R$ and $B$ is convex when projected
under~$\pi$. Of these two wedges, let $W_\varepsilon$ be the first wedge clockwise from $W_\beta$ that is convex
when projected under~$\pi$.

Note that once the apex, $p$, and the first semicircle, $C_1$, 
of $\gamma_\varepsilon$ are fixed, all of $\gamma_\varepsilon$ is determined.
This implies that once $\gamma_\varepsilon$ is oriented we can identify 
it with a tuple of unit orthogonal vectors. 
Assume that $S^2$ is the 
unit sphere centered at the origin. Set the first vector
of the tuple to be $p$ and the second to be the unit vector orthogonal 
to $p$ that lies on $C_1$. It can be verified that this space, $V_2(\mathbb{R}^3)$, 
of tuples of orthogonal unit vectors in $\mathbb{R}^3$ is compact.  

Consider any sequence of $\varepsilon$'s that converges to zero; the corresponding
sequence of $\gamma_\varepsilon$'s has a limit point  $\gamma$.
Let  $\{\gamma_{\varepsilon_i}\}_{i=1}^\infty$ be a subsequence of this sequence that converges
to $\gamma$ and let $W$ be the wedge
of $\gamma$ that $\{W_{\varepsilon_i}\}_{i=1}^\infty$ converges to. $W$~is convex  since
each of the terms in $\{\gamma_{\varepsilon_i}\}_{i=1}^\infty$ is convex when projected
under~$\pi$. 

Note that since  $S$ is in general position no three points $\pi^{-1}(S)$ are in a
common great semicircle of $S^2$. Furthermore, each of the terms in 
$\{W_{\varepsilon_i}\}_{i=1}^\infty$ contain an $\alpha$ proportion of the
area of $R$ and $B$. This implies that the closure of $W$ (in $S^2$)
contains at least  $\lceil \alpha r \rceil$ and at most
 $\lceil \alpha r \rceil+1$ red points of $\pi^{-1}(S)$. By
the same token, it contains at least  $\lceil \alpha b \rceil$ and at most
 $\lceil \alpha b \rceil+1$ blue points of $\pi^{-1}(S)$.
If $W$ has one more red point than  $\lceil \alpha r \rceil$,
then a red point lies in one of the bounding semicircles of $W$. Similarly, 
if $W$ has one more blue point than  $\lceil \alpha b \rceil$,
then a blue point lies in one of the bounding semicircles of $W$.
In both cases a small perturbation of $W$ ensures that it
contains exactly $\lceil \alpha r \rceil$ red points
and exactly $\lceil \alpha b \rceil$ blue points.
The convex wedge $\pi(W)$ is the desired convex set
and the result follows.
\end{proof}

\begin{figure}
  	\begin{center}
   	\includegraphics[scale = 0.65, page = 2]{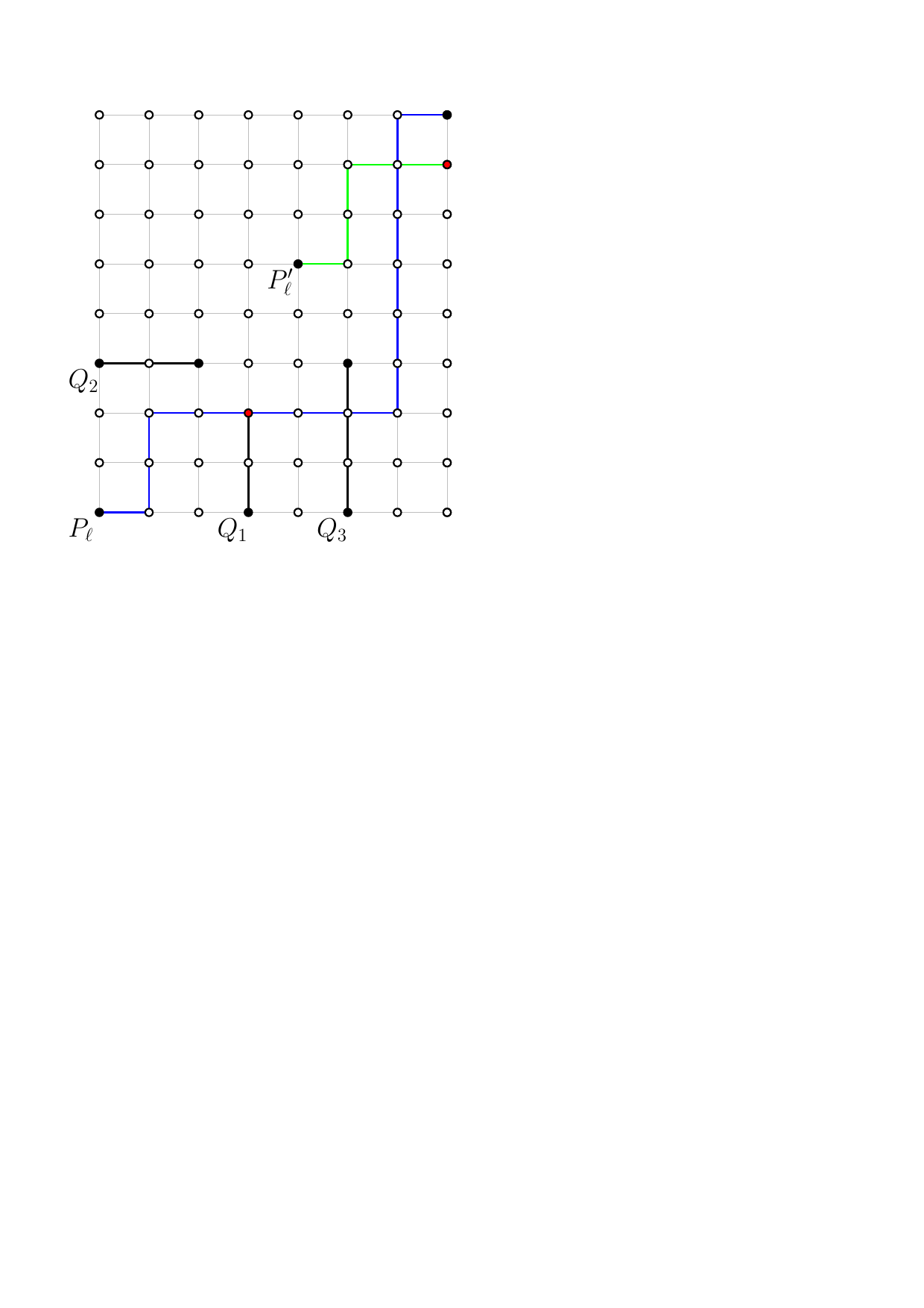}
	\end{center}
\caption{Illustration of the proof of Lemma~\ref{lem:n+1} for $r=7$ and $b=8$. 
	The points of $P_\ell$ are colored like the according points of $S$. 
	The part of $P_\ell$ that corresponds to the desired subset of $S$ is marked with a gray background.}
\label{fig:grid}
\end{figure}

\begin{lemma}\label{lem:n+1}
Let $S$ be a set of $r$ red points and $b$ blue points in the plane. Then 
there exists a strip containing exactly $\left \lceil \frac{r+1}{2}\right \rceil$ red points and exactly $\left \lceil \frac{b+1}{2}\right \rceil$ blue points of $S$.
\end{lemma}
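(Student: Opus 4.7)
My plan is to follow the template of Lemma~\ref{lem:gen}: prove a continuous version of the statement for bounded open sets in $\mathbb{R}^2$ and then pass to the discrete setting by shrinking $\varepsilon$-disks around the points of $P$. Specifically, I would aim to show that for any two bounded open sets $R, B\subset\mathbb{R}^2$ there exists a strip $S$ with $\vol(S\cap R)=\vol(R)/2$ and $\vol(S\cap B)=\vol(B)/2$, then apply this to $R_\varepsilon, B_\varepsilon$ (the unions of open $\varepsilon$-disks around the red and blue points of $P$) and extract a convergent subsequence of bisecting strips as $\varepsilon\to 0$.

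For the continuous claim I would apply the CS-TM scheme of Section~\ref{sec:topo} in a configuration space tailored to strips. The space of oriented strips in $\mathbb{R}^2$ is $3$-dimensional (a direction in $S^1$ plus two offsets along that direction) and carries a free $\mathbb{Z}_2$-action that swaps the two boundary lines while reversing the orientation $\vec v\mapsto-\vec v$. The test map sends each oriented strip $S$ to the pair of signed area discrepancies $(\vol(S\cap R)-\vol(R)/2,\ \vol(S\cap B)-\vol(B)/2)\in\mathbb{R}^2$, and is $\mathbb{Z}_2$-equivariant under the antipodal action on $\mathbb{R}^2$. If no bisecting strip existed, the restricted map would avoid the origin, producing a $\mathbb{Z}_2$-equivariant map into $\mathbb{R}^2\setminus\{0\}$; the non-existence of such a map---together with an anchor provided by the Ham Sandwich theorem in the form of a degenerate ``strip of width zero''---should yield the desired contradiction.

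Passing from the continuous to the discrete statement is then routine, as in Lemma~\ref{lem:gen}. The sequence of $\varepsilon$-bisecting strips $\{S_\varepsilon\}$ has a convergent subsequence, with limit a strip $S$ whose closure contains between $\lfloor r/2\rfloor$ and $\lceil r/2\rceil+1$ red points (depending on which red points lie on the two boundary lines of $S$) and similarly for blue points. A small perturbation of the two bounding lines of $S$, decided by a short case analysis on boundary incidences, then yields a strip containing exactly $\lceil(r+1)/2\rceil$ reds and $\lceil(b+1)/2\rceil$ blues, as required.

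The hard part will be carrying out the Borsuk--Ulam step for the continuous claim. Unlike the Ham Sandwich proof, the configuration space of oriented strips is three-dimensional and non-compact while the target $\mathbb{R}^2$ is only two-dimensional, so the standard Borsuk--Ulam theorem does not apply directly. I expect to resolve this either by restricting to a compact $\mathbb{Z}_2$-invariant subspace (for instance, strips whose width is bounded in terms of the diameter of $R\cup B$), or by normalising the family to reduce its dimension---say, by insisting that one boundary line of the strip bisects $R$, which leaves a single circle of candidate strips parameterised by direction, on which the Intermediate Value Theorem on $S^1$ can be applied to a carefully chosen antipodal function. Making one of these reductions rigorous is the main technical obstacle I anticipate.
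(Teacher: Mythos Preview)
Your proposal has a genuine gap, and it is not at the Borsuk-Ulam step you flag. The continuous claim you target---that any two bounded open sets $R,B\subset\mathbb{R}^2$ admit a strip halving both---is true and comparatively soft (for each fixed direction, a one-parameter Intermediate Value argument on the family of strips that halve $R$ already produces one that also halves $B$), so the CS-TM apparatus is not where the difficulty lies.

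The real obstruction is that a \emph{bisecting} strip does not yield $\lceil (r+1)/2\rceil$ and $\lceil (b+1)/2\rceil$ in the discrete limit. Note that $\lceil (r+1)/2\rceil=\lfloor r/2\rfloor+1$, which strictly exceeds $\lceil r/2\rceil$ whenever $r$ is even. Passing your $\varepsilon$-bisecting strips to a limit, the closed limit strip is only guaranteed to contain at least $\lceil r/2\rceil$ red points; when $r$ is even and no red point lands on the boundary, you get exactly $r/2$, one short of the target. No \emph{small} perturbation can manufacture the missing point, and widening the strip to capture one more red point may simultaneously capture zero or two extra blue points, with no control over which. The ``$+1$'' over the Ham Sandwich count is exactly the content of this lemma---it is what separates it from the $\alpha=1/2$ case of Lemma~\ref{lem:gen}---and it cannot be extracted from a continuous halving statement by perturbation.

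The paper's proof is quite different and stays discrete throughout. For each oriented line $\ell$ it encodes the colour sequence of the orthogonal projection of $P$ as a monotone lattice path from $(0,0)$ to $(r,b)$; the desired interval exists exactly when this path meets its translate by $(r',b')$. A short case analysis shows that if they fail to meet for $\ell$, then two specific consecutive positions in the projection must both be blue. Rotating $\ell$ continuously to its reverse orientation reverses the projected order one adjacent transposition at a time, so at some intermediate direction one of those positions becomes red, forcing the paths to intersect.
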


\begin{proof}
Let $r':=\left \lceil \frac{r+1}{2} \right \rceil$ and $b':=\left \lceil \frac{b+1}{2}\right \rceil$.
For a given oriented line $\ell$, let $S_\ell$ be the orthogonal projection of $S$ to $\ell$.
Assume that $\ell$ is such that no two points of $S$ are projected to the same point in $S_\ell$. 
Further assume that the points of $S_\ell$ are sorted by the order in which they appear on $\ell$. 
Note that if $P_\ell$ contains an interval (i.e., a contiguous subsequence of points) of exactly $r'$ red points and exactly $b'$ blue points,
then there exists a strip bounded by two lines orthogonal to $\ell$ and containing exactly $r'$ red points and $b'$ blue points of $S$.

Let $G$ be the $(r+1) \times (b+1)$ integer grid graph with vertex set $\{(i,j): 0 \le i \le r \textrm{ and } 0 \le j \le b\}$, 
	in which two vertices are adjacent if in one of their coordinates they are equal and in the other they differ by one. 
We assign a path $P_\ell:=((i_1,j_1),\dots,(i_{r+b},j_{r+b}))$ in $G$ to $\ell$.
We define it as follows. 
The first vertex $(i_1,j_1)$ is equal to $(0,0)$. 
For $k\ge 2$, the $k$-th vertex $(i_k,j_k)$ is equal to $(i_{k-1}+1,j_{k-1})$ if the $(k-1)$-st element of $S_\ell$ is red, and equal to  $(i_{k-1},j_{k-1}+1)$ if it is blue.
Note that $P_\ell$ always ends at $(r,b)$.

Let $P_\ell'$ be the translation of $P_\ell$ by the vector $\left ( r', b' \right )$.
That is, $P_\ell'$ has length $r+b-r' - b'$, and the $k$-th vertex of $P_\ell'$ has coordinates $\left (i_k+r' ,j_k+b' \right )$. 
If $P_\ell$ and $P_\ell'$ have a common vertex $(x',y')$ then $P_\ell$ contains the vertex $\left ( x'-r', y'-b' \right )$. 
This implies that in the contiguous sequence from the $\left (x'-r'+y'-b' \right )$-th to the $(x'+y'-1)$-st point of $S_\ell$ there are exactly $r'$ red and $b'$ blue points.
Hence it suffices to show that $P_\ell$ and $P_\ell'$ intersect.

We may assume that $P_\ell$ does not contain the vertex $\left ( r-r',b-b' \right )$, since it would imply that $P_\ell'$ also ends at $(r,b)$. 
Therefore, $P_\ell$ must intersect the path 
	$Q_1:=\left ( r-r',0 \right )$, $\left( r-r',1 \right )$, $\dots$, $\left ( r-r', b-b' -1 \right )$ or the path 
	$Q_2:=\left ( 0,b-b' \right )$, $\left( 1, b-b' \right )$, $\dots$ , $\left ( r-r'-1, b-b' \right )$.
Without loss of generality, assume that $P_\ell$ intersects $Q_1$ (if not we interchange the colors), and let $(x,y)$ be an intersection point of $P_\ell$ and $Q_1$.

Suppose first that $P_\ell$ intersects the path $Q_3:=$ $\left ( r-r'+2,0 \right )$, $\left( r-r'+2,1 \right )$, $\dots$ , $\left ( r-r'+2, b-b' \right )$. 
Then, since $P_\ell$ ends at $(r,b)$ it must intersect the subpath of $P_\ell'$ that starts at $(r',b')$ and ends at $(r'+x,b'+y)$, see Figure~\ref{fig:grid}. 

Assume now that $P_\ell$ does not intersect $Q_3$.
Therefore, $P_\ell$ contains the vertices $(r-r'+1,b-b'-1)$, $(r-r'+1,b-b')$ and  $(r-r'+1,b-b'+1)$. 
In particular, the $(r+b-r'-b')$-th and the $(r+b-r'-b'+1)$-th point of $S_\ell$ are blue. 
Move $\ell$ continuously until it reaches a line parallel to it but with opposite orientation. 
Throughout the motion, contiguous elements in $S_\ell$ exchange their positions until $S_\ell$ is inverted. 
During these exchanges, $P_\ell$ also changes somewhat continuously: at each step, one rightwards-upwards corner may become upwards-rightwards, or vice versa. 
This implies that at some line $\ell^*$ the $(r+b-r'-b')$-th or the $(r+b-r'-b'+1)$-th point of $P_{\ell^*}$ is red and $P_{\ell^*}$ and $P_{\ell^*}'$ intersect.
\end{proof}

\section{Algorithms} \label{sec:algo}

A drawback of using topological methods is that they tend  to provide
existential rather than constructive proofs. This is the case
for the proof of Theorem~\ref{thm:hobby}. In this section we give polynomial time
algorithms to find balanced islands.

Our strategy is to use the fact that the convex sets in 
Theorem~\ref{thm:hobby} are either strips or wedges (Lemmas~\ref{lem:gen} and \ref{lem:n+1}). We discretize
the space of candidates (wedges or strips) and efficiently visit them.
 If we are looking for a wedge we first discretize
the set of possible apices; if we are looking for a strip
we first discretize the set of possible directions for the boundaries of the strip.
These two steps are very similar. Indeed,
strips and wedges are the same object on the sphere; 
fixing a direction of a strip is equivalent to fixing
an apex for the corresponding wedge on the sphere (see Section~\ref{sec:k_fan}). 

Suppose that we are looking for a wedge.  We know that
the solution must have exactly $k:=\lceil \alpha r\rceil+\lceil \beta b\rceil$ points of $S$ (regardless of color).
Suppose that a finite set of candidates apices has been chosen, and
that $p$ is the first candidate apex visited.
The wedges with apex $p$ containing exactly $k$ points
of $S$ can be discretized as follows. Sort $S$ clockwise by angle
around $p$  in $O(n \log n)$ time. Consider the set of 
all intervals  of $k$ contiguous points of $S$ in this ordering; 
the set of candidate wedges with apex $p$ are those wedges 
whose bounding rays pass through the endpoints
of these intervals. Note that for every wedge with
apex $p$ and containing $k$ points of $S$, there exists
a candidate wedge containing exactly the same points of $S$. 
This set of wedges around $p$ can be constructed in  $O(n \log n)$ 
time, and as we construct them we can check whether any of them
is balanced and convex. The candidate apices will then be visited in such a way so as
to minimize the changes in this set of candidate wedges; the only possible
changes are that two points transpose in the order around
the apex, or that a wedge ceases to be or becomes convex. In particular, 
at most two wedges can change: the two wedges whose  associated intervals
have endpoints at the points that transpose or the wedge that ceases to be or becomes 
convex. As a result, we can update our set of candidates in constant time when we go 
from one candidate apex to the next.  

When we are looking for strips we proceed in a similar way.
First we discretize the the set of possible directions for the strips.
We visit the first direction $\ell$ and sort the points in the orthogonal
direction to $\ell$ in $O(n \log n)$ time. Like before,
the candidate strips are the set of intervals containing $k$ points in this
ordering. When we visit the next direction the only change that occurs is that 
two points transpose in this ordering. Again, we can update our set
of candidate strips in constant time. In the proof of 
Lemmas~\ref{lem:wedge_alg} and \ref{lem:strips_alg}, we detail how to compute
and visit these sets of candidates.

\begin{lemma}\label{lem:wedge_alg}
Let $S$ be a set of $n$ points in general position in the plane, $r$ of which are red and $b$ of which are blue.
Then for any $r'\le r$ and $b'\le b$, finding a convex wedge 
containing exactly $r'$ red points and exactly $b'$ 
points of $S$, or determining that no such wedge exists,
can be done in $O(n^4)$ time.
\end{lemma}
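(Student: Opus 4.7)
The plan has two ingredients: a structural characterization of the candidate wedges for a fixed apex, and an efficient traversal of the combinatorially distinct apex positions. For the structural part, I observe that any convex wedge $W$ with apex $p$ containing exactly $k := r' + b'$ points of $P$ can be rotated inward about $p$, shrinking the angle at $p$, without changing $W \cap P$, until both bounding rays pass through a point of $P$. These two points are then the angular extremes of $W \cap P$ in the angular order of $P$ around $p$. Hence once $P$ is sorted by angle around $p$, the candidate wedges are in bijection with the $O(n)$ intervals of $k$ consecutive points in this sorted order, and among these candidates we need only check color content $(r',b')$ and convexity.

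To discretize the set of apices, I would use the arrangement $\mathcal{A}$ of the $\binom{n}{2}$ lines through pairs of points of $P$. Within any single face of $\mathcal{A}$ the angular order of $P$ around the apex is fixed and the convexity of every wedge whose two bounding rays pass through two points of $P$ is also fixed, so one representative per face suffices. Since $\mathcal{A}$ is an arrangement of $O(n^2)$ lines, it has $O(n^4)$ faces and $O(n^4)$ edges in its dual graph, which will drive the running time.

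The algorithm picks an apex $p_0$ in the interior of some face, sorts $P$ angularly around $p_0$ in $O(n\log n)$, and stores the ordering in a doubly-linked list. For each of the $O(n)$ candidate $k$-intervals it maintains the pair of defining points (the interval's endpoints), the red and blue counts of the interval, and a convexity bit for the corresponding wedge; this initial bookkeeping takes $O(n)$ after the sort. It then performs a DFS over the faces of $\mathcal{A}$ along its dual graph. Each traversed dual edge corresponds to the apex crossing a line $\ell_{q_a q_b}$, producing either an angular swap of $q_a$ and $q_b$ (when they are adjacent in the current order, forcing a refresh of the defining points, color counts, and convexity bits of the $O(1)$ intervals whose endpoints changed) or a convexity flip of the candidate whose defining pair is $\{q_a,q_b\}$ (when such a candidate exists). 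Both updates are $O(1)$, and after each we test whether the affected candidate is both balanced and convex, reporting success immediately.

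The main obstacle will be uniformly enumerating these two kinds of events triggered by the same family of $O(n^2)$ lines and handling the unbounded faces of $\mathcal{A}$---where the apex ``escapes to infinity'' and the wedge degenerates into a strip---so that the DFS visits every combinatorially distinct apex position exactly once and correctly maintains the data structures across such transitions. Once this bookkeeping is in place, the DFS processes $O(n^4)$ face transitions with $O(1)$ work each, giving the claimed $O(n^4)$ running time on top of the $O(n\log n)$ initial sort; if the traversal terminates without finding a balanced convex candidate, we report that no such wedge exists.
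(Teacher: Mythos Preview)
Your proposal is correct and follows essentially the same approach as the paper: build the arrangement $\mathcal{A}$ of the $\binom{n}{2}$ lines through pairs of points, pick one representative apex per face, maintain the $k$-intervals of the angular order together with their color counts and convexity bits, and traverse the faces by DFS while doing $O(1)$ updates per face transition (either a transposition of two adjacent points or a convexity flip). Your worry about unbounded faces and strips is unnecessary here---this lemma only concerns wedges with a finite apex, so a finite representative in each (possibly unbounded) face of $\mathcal{A}$ suffices; strips are handled separately in the paper.
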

\begin{proof}

Let $\mathcal{A}$ be the line arrangement generated by the set of lines 
that pass through every pair of points in $S$. As there are $O(n^2)$ such lines, 
$\mathcal{A}$ can be constructed in $O(n^4)$ time using standard
algorithms for constructing line arrangements. Also note that $\mathcal{A}$
has $O(n^4)$ cells.

Let $C$ be a cell of $\mathcal{A}$ and let $p$ be 
a point in its interior. Starting at
an arbitrary point of $C$, sort the points of $S$ clockwise by angle around $p$. 
 This is done in $O(n \log n)$ time;
let $S_p$ be this sorted set.

Let $k:=r'+b'$.
For $1\le i \le n$, let $I_i$ be the interval of $S_p$
that starts at the $i$-th point and ends at the $(i+k-1)$-th
point of $S_p$ (modulo $n$). We compute the number of red and blue points in each $I_i$
in $O(n)$ time, by computing this parameter for $I_1$ and updating it
when we move from $I_i$ to $I_{i+1}$.  We keep pointers from the first and last vertices
of $I_i$ to itself. Let $W_i$ be the wedge with apex $p$, that contains $I_i$, and
whose bounding rays pass through the first and last vertex of $I_i$. We also keep
a record of whether $W_i$ is convex. Note that neither the number of red and blue points of $S$
 inside $W_i$, nor whether it is convex, depend on the choice of $p$ within $C$. 

We visit all the cells in $\mathcal{A}$ by doing a DFS search in
$O(n^4)$ time on its
dual graph, so that we move between adjacent cells. We choose
a point $p$ in the interior of each of them. 
The only changes than can occur is that two consecutive
points of $S_p$ transpose or that a wedge ceases to be or becomes convex.
This is determined by which line of $\mathcal{A}$ was crossed
when visiting the next cell. By knowing which pair of points define
this line,  we can update the corresponding $W_i$'s in constant time.

Consider any convex wedge $\gamma$ with apex $q$ that contains $k$ points
of $S$. Some cell of $\mathcal{A}$ contains
$q$ and at some point in our algorithm we chose a point $p$ in this cell. 
One of the candidate wedges  with apex $p$ contains exactly the same 
points as $\gamma$ and the result follows.
\end{proof}

\begin{lemma}\label{lem:strips_alg}
Let $S$ be a set of $n$ points in general position in the plane,
$r$ of which are red and $b$ of which are blue.
Then for any $r'\le r$ and $b'\le b$, finding a strip 
containing containing exactly $r'$ red points and exactly $b'$ blue
points of $S$, or determining that no such strip exists,
can be done in $O(n^2 \log n)$ time.
\end{lemma}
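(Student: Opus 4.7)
The plan is to follow the framework sketched in the paragraph preceding the lemma: discretize the space of strip directions into combinatorial equivalence classes, walk through them in angular order, and at each transition update the set of candidate intervals in constant time. Parameterize the direction of a strip by an angle $\theta \in [0,\pi)$, and for a fixed $\theta$ let $P_\theta$ denote the list of points of $P$ sorted by their projection onto a line in direction $\theta$. The key observation is that the ordering $P_\theta$ changes only at finitely many angles: for each pair of points $p_i, p_j \in P$ there is exactly one $\theta_{ij} \in [0,\pi)$ at which the projections of $p_i$ and $p_j$ coincide, namely the angle whose direction is perpendicular to the segment $p_i p_j$. Between two consecutive critical angles, the order $P_\theta$ is constant; at a critical angle $\theta_{ij}$, the points $p_i$ and $p_j$ are adjacent in $P_\theta$ and swap.

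Setting $k := r' + b'$, for each position $i$ we consider the interval $I_i$ consisting of $k$ consecutive points of $P_\theta$ starting at position $i$. Any such interval determines a candidate strip bounded by the two lines orthogonal to $\theta$ through the first and last point of $I_i$, and conversely any strip in direction $\theta$ that contains exactly $k$ points of $P$ arises in this way. Thus to prove the lemma it suffices, for one representative $\theta$ in each combinatorial class, to test whether some $I_i$ contains exactly $r'$ red and $b'$ blue points.

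The algorithm is as follows. First compute the $\binom{n}{2}$ critical angles $\theta_{ij}$ and sort them in $O(n^2\log n)$ time. Pick an initial direction $\theta_0$ just before the smallest critical angle, sort $P$ by projection onto $\theta_0$ in $O(n\log n)$ time, and compute for each of the $n-k+1$ intervals $I_i$ the number of red and blue points it contains, using a single sliding-window pass in $O(n)$ time; in addition maintain a counter (or flag) recording how many intervals currently realize the target pair $(r', b')$. Now sweep through the critical angles in increasing order. At each event $\theta_{ij}$ the two points $p_i, p_j$ swap adjacent positions $p, p+1$ in $P_\theta$; only the intervals $I_i$ with $i+k-1 = p$ or $i = p+1$ have the swap straddling their boundary, so at most two intervals change contents (each losing one of the swapped points and gaining the other). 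Update the colour counts of these intervals and the global flag in $O(1)$; whenever the flag becomes positive, output the corresponding strip.

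Correctness holds because the family of interval contents depends only on $P_\theta$, which is constant on each open arc between consecutive critical angles, so every strip that satisfies the conditions of the lemma is represented by some $I_i$ inspected during the sweep. The running time is dominated by the sort of the $O(n^2)$ critical angles, giving $O(n^2 \log n)$; the initial sort costs $O(n\log n)$ and the sweep performs $O(n^2)$ constant-time updates. The main subtlety to handle carefully is the event bookkeeping: at each swap one must correctly identify which $I_i$'s have exactly one of the swapped positions as a boundary, so that the update indeed touches only $O(1)$ intervals and the maintained flag accurately reflects the presence of a balanced strip.
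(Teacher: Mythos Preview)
Your proposal is correct and follows essentially the same approach as the paper: both sort the $O(n^2)$ critical directions (the paper uses slopes of lines through point pairs, you use the perpendicular angles, which is equivalent), initialize the $k$-intervals via a sliding window, and sweep through the events updating at most two intervals per swap in constant time. The structure, correctness argument, and running-time analysis match the paper's proof almost line for line.
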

\begin{proof}

Let $\mathcal{L}$ be the set of lines generated by every pair of 
points in $S$. Sort the lines in $\mathcal{L}$ by slope in $O(n^2\log n)$
time. Let $\ell_1$ be the first line of $\mathcal{L}$. Project $S$ orthogonally
to $\ell_1$ and let $S_{\ell_1}$ be this set. Note that strictly speaking,
 the pair of points that define $\ell_1$ are mapped
to the same point. We wish to avoid this, so we actually choose a line with a slightly larger slope
than $\ell_1$ (but smaller than the next line in $\mathcal{L}$).
 We will make this choice each time we visit a line in $\mathcal{L}$.

 Sort in $O(n \log n)$ time 
the points in $S_{\ell_1}$ by the order in which they appear on $\ell_1$. 
Note that if $S_{\ell_1}$ contains an interval of  exactly $r'$ red points
and exactly $b'$ blue points
then there  exists a strip bounded by two lines, orthogonal to $\ell_1$, containing 
exactly $r'$ red points and 
exactly $b'$ blue points of $S$.

Let $k:=r'+b'$. For $1\le i \le n-k+1$, let $I_i$ be the interval of $S_{\ell_1}$
that starts at the $i$-th point and ends at the $(i+k-1)$-th
point of $S_{\ell_1}$. We compute the number of red and blue points in each $I_i$
in $O(n)$ time, by computing this parameter for $I_1$ in $O(n)$ time
and updating it when we move from $I_i$ to $I_{i+1}$.
We keep pointers from the first and last vertices
of $I_i$ to itself. 

We visit the $\ell_j$'s in order, while maintaining 
the $I_i$'s and their respective number of red and blue points. This can be
done in constant time per line. At each step only two consecutive
points of $S_{\ell_j}$ interchange their positions. The only intervals
that change their endpoints---and thus their number of red and blue
points---are precisely the two intervals that start and end at these
points.

Note that every strip can be rotated without changing
the points of $S$ it contains until its bounding lines
are orthogonal to a line in $\mathcal{L}$. Therefore,
finding a strip containing
exactly $r'$ red points and exactly $b'$ blue points of $S$, or
determining that no such a strip exists,
can be done in $O(n^2 \log n)$ time. 
\end{proof}

We now state the algorithmic version of Theorem~\ref{thm:hobby}.
\begin{theorem}\label{thm:hobby_alg}
Let $S$ be a set of $n$ points in general position in the plane,
$r$ of which are red and $b$ of which are blue; 
let $\alpha \in \left [ 0,\frac{1}{2} \right ]$, then:
\begin{enumerate}
  \item  a convex set containing exactly $\lceil \alpha r\rceil$ red points
and exactly $\lceil \alpha b \rceil$ blue points of $S$ can be found in $O(n^4)$ time;

  \item a convex set containing exactly $\left \lceil \frac{r+1}{2}\right \rceil$ red points
and exactly $\left \lceil \frac{b+1}{2}\right \rceil$ blue points of $S$ can be found in $O(n^2\log n)$ time.
\end{enumerate}
\end{theorem}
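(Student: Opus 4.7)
The plan is to observe that Theorem~\ref{thm:hobby_alg} is essentially a bookkeeping combination of two existence results (Lemmas~\ref{lem:gen} and \ref{lem:n+1}) with two algorithmic primitives (Lemmas~\ref{lem:wedge_alg} and \ref{lem:strips_alg}). The whole point of having proved Lemmas~\ref{lem:gen} and \ref{lem:n+1} with the extra structural information ``the witnessing convex set is a wedge or a strip'' is precisely to make this reduction free.

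For the first part, I would set $r':=\lceil \alpha r\rceil$ and $b':=\lceil \alpha b\rceil$ and run two searches. First, I would invoke Lemma~\ref{lem:wedge_alg} on $P$ with these parameters; this either returns a convex wedge with the desired color count or certifies that none exists, and runs in $O(n^4)$ time. Second, in parallel I would invoke Lemma~\ref{lem:strips_alg} with the same parameters $r'$ and $b'$, which returns a suitable strip or certifies non-existence in $O(n^2\log n)$ time. By Lemma~\ref{lem:gen} at least one of the two searches must succeed, so I output whichever does. The total running time is dominated by the wedge search, giving the claimed $O(n^4)$ bound.

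For the second part, I would set $r':=\lceil (r+1)/2\rceil$ and $b':=\lceil (b+1)/2\rceil$. Lemma~\ref{lem:n+1} already tells us that the witness may be taken to be a strip (no wedge search is required), so a single call to Lemma~\ref{lem:strips_alg} with these parameters suffices, producing the desired strip in $O(n^2\log n)$ time.

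There is essentially no obstacle left at this stage; the conceptual difficulty was isolated in Lemmas~\ref{lem:gen} and \ref{lem:n+1}, which narrow the search space to convex wedges and strips, and in Lemmas~\ref{lem:wedge_alg} and \ref{lem:strips_alg}, which solve the discretized search via arrangement traversal. The only minor thing to check is parameter matching: Lemmas~\ref{lem:wedge_alg} and \ref{lem:strips_alg} are stated for arbitrary $r'\le r$, $b'\le b$, so plugging in $(\lceil \alpha r\rceil,\lceil \alpha b\rceil)$ and $(\lceil (r+1)/2\rceil,\lceil (b+1)/2\rceil)$ is immediate.
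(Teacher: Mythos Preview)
Your proposal is correct and matches the paper's approach: the paper's own proof simply cites Lemmas~\ref{lem:gen} and \ref{lem:n+1} for existence and Lemmas~\ref{lem:wedge_alg} and \ref{lem:strips_alg} for the running times. You have merely spelled out explicitly (and correctly) that for part~1 both the wedge search and the strip search may be needed since Lemma~\ref{lem:gen} only guarantees the witness is a convex wedge \emph{or} a strip, while for part~2 the strip search alone suffices.
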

\begin{proof}
The existence of these convex sets follows from Lemmas~\ref{lem:gen} and \ref{lem:n+1}.
The running times of the algorithms to find them follow from  Lemmas~\ref{lem:wedge_alg} 
and \ref{lem:strips_alg}.
\end{proof}

\subsection{Balanced Islands in $O(n \log n)$ Time}

The running times of the previous algorithms can be improved significantly for many values of $\alpha$.  
For example, Lo and Steiger~\cite{ham_alg} gave an optimal $O(n)$ time algorithm for finding a Ham Sandwich cut for $S$, by this giving an optimal algorithm for $\alpha=\frac{1}{2}$.
Using the results from the following lemmas, we will present a significantly improved algorithm for a large range of values $\alpha$ in Theorem~\ref{thm:alg_2}.  

To this end, we first introduce the concept of weighted islands. 
Assume that every red point in $S$ is given a positive weight of $1/r$ and every blue point in $S$ is given a negative weight of $-1/b$. 
For a given island of $S$ let its weight be the sum of the weight of its points.
To obtain a balanced island for a given $\alpha \in [0,\frac{1}{2}]$, we want 
	an island of weight $\lceil \alpha r\rceil/r - \lceil \alpha b\rceil/b$ 
	and containing $\lceil \alpha r\rceil + \lceil \alpha b\rceil$ points. 
If $\alpha r$ and  $\alpha b$ are both integers then the weight of this island is equal to zero. 
Otherwise, it is as close to zero as possible, among the islands of $S$ with  $\lceil \alpha r\rceil + \lceil \alpha b \rceil$ points.
If an island has weight larger than $\lceil \alpha r\rceil/r - \lceil \alpha b\rceil/b$, we call it \emph{positive}; 
if it has weight smaller than this value we call it \emph{negative}. 

Suppose that we have found a positive island $I$ and a negative island $J$, both with $\lceil \alpha r\rceil + \lceil \alpha b \rceil$  points of $S$. 
A promising approach would be to move ``continuously'' from $I$ to $J$, so that somewhere in the middle 
	we find a balanced island of $\lceil \alpha r\rceil + \lceil \alpha b \rceil$ points.
This indeed can be done. In \cite{island_graph} Bautista-Santiago et al.\ defined a graph whose vertices are all the islands of $S$ of a given size~$k$, 
	where two of them are adjacent if their symmetric difference has a fixed cardinality $\ell$. 
They showed that under mild assumptions on $k$ and $\ell$ that this graph is connected. 
In our case we have $k=\lceil \alpha r\rceil + \lceil \alpha b \rceil$ and $\ell=2$. 
A path from $I$ to $J$ in the resulting graph is our desired sequence; somewhere in the middle of such a sequence there is a balanced island. 
We show how to compute this sequence.

\begin{lemma}\label{lem:island_graph}
Let $S$ be a set of $n$  points in the plane and 
let $I$ and $J$ be islands of $S$ of $k$ points each.
Then there exists a sequence of $O(n)$ islands, starting at $I$ and ending at $J$, such that the symmetric difference between two consecutive islands is a pair of points. 
This sequence can be computed in $O(n \log n)$ time.
\end{lemma}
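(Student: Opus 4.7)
The plan is to build an explicit $O(n)$-length ``canonical chain'' of islands that any other size-$k$ island can be connected to through a short prefix. First I would sort $P$ by $y$-coordinate in $O(n\log n)$ time, obtaining $p_1,\ldots,p_n$, and for each $i\in\{1,\ldots,n-k+1\}$ set $K_i := \{p_i,p_{i+1},\ldots,p_{i+k-1}\}$. Each $K_i$ is an island because $\conv(K_i)$ lies in the horizontal strip $\{y(p_i)\le y\le y(p_{i+k-1})\}$ and no point of $P$ outside $K_i$ has its $y$-coordinate in this range. Consecutive canonical islands satisfy $K_i\triangle K_{i+1}=\{p_i,p_{i+k}\}$, so $K_1,K_2,\ldots,K_{n-k+1}$ is itself a valid sequence of length $O(n)$ with pairwise symmetric differences of size two.

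Next I would show that any island $I$ of size $k$ can be connected to some $K_{i_0}$ via $O(k)$ pair-swaps, each yielding another size-$k$ island. Let $p_\tau$ be the topmost point of $I$; it is a convex-hull vertex of $I$, and the intended target is $K_{\tau-k+1}$. If $I\neq K_{\tau-k+1}$, then $I$ contains at least one ``low outlier'' $p_l$ with $l<\tau-k+1$ and is missing some index $i$ with $\tau-k+1\le i<\tau$. Relying on the existence argument of \cite{island_graph}, a pair-swap that simultaneously preserves islandness and strictly reduces the number of low outliers always exists---for instance, one natural candidate is to remove the hull vertex $p_\tau$ and add the lowest-index point whose ``insertion wedge'' against $\conv(I\setminus\{p_\tau\})$ is empty of $P$. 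Iterating $O(k)$ times produces the desired path to $K_{i_0}$, and symmetrically for $J$, obtaining some $K_{j_0}$. Concatenating $I\to\cdots\to K_{i_0}\to\cdots\to K_{j_0}\to\cdots\to J$ through the canonical chain yields a sequence of length $O(n)$.

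For the running time, sorting dominates at $O(n\log n)$, and the canonical chain is stored implicitly by the sorted array with $O(1)$ cost per transition. For the reductions of $I$ and $J$, I would maintain the current island's convex hull in a balanced binary tree supporting tangent queries, insertions and deletions in $O(\log n)$, together with a bit-vector recording island membership. Each individual swap can then be identified and applied in $O(\log n)$ amortized time, so since there are only $O(n)$ swaps in total the overall algorithm runs in $O(n\log n)$ time. The hard part is the combinatorial reduction step: one must show that there is always a valid pair-swap that monotonically decreases a suitable potential, and the interplay between which hull vertex is removed and which point can be safely added makes the case analysis delicate---this is precisely where the full strength of \cite{island_graph} is invoked.
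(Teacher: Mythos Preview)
Your high-level plan coincides with the paper's: both build a ``canonical chain'' of islands given by $k$ consecutive points in a linear order on $P$, and then reduce each of $I,J$ to that chain by a short sequence of pair-swaps. The difficulty, as you correctly identify, is entirely in the reduction step---and this is where your proposal has a genuine gap.

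First, the concrete swap you suggest is mis-specified. You propose to remove $p_\tau$, the topmost point of $I$; but $p_\tau$ \emph{belongs} to your target $K_{\tau-k+1}=\{p_{\tau-k+1},\ldots,p_\tau\}$, so deleting it moves you away from the target, not toward it, and your potential (the number of low outliers) does not decrease. The natural move would be to delete a low outlier and insert a missing high-index point, but then one must argue islandness of the result, and your ``insertion wedge'' criterion is not made precise enough to do so. Second, you defer the existence of a valid swap to \cite{island_graph}; that paper establishes connectivity and a diameter bound for the island graph, but it does not hand you an $O(n\log n)$ algorithm. Since the lemma asserts both an $O(n)$-length path \emph{and} an $O(n\log n)$ procedure, citing connectivity is not enough---you still owe an explicit, efficiently computable swap rule together with a proof that every intermediate set is an island.

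The paper's reduction is different and worth comparing. It fixes once and for all $C:=\conv(I)$ and a min-priority queue $Q$ containing the points of $P\setminus I$ lying between the leftmost and rightmost points of $I$, keyed by distance to $C$. At each step it removes the current rightmost point $p_i$ and inserts the first point $q$ popped from $Q$ that lies to the left of $p_i$. The islandness of every $I_i$ follows from a short Carath\'eodory argument: if some $p\in P\setminus I_i$ lay in $\conv(I_i)$, one of the three triangle vertices containing it would be farther from $C$ than $p$, hence would have been popped after $p$, a contradiction. Because the reference hull $C$ is fixed, distances are computed once in $O(\log n)$ per point, and the whole reduction is a single pass through a heap---no dynamic convex hull is needed. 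This is both simpler and more clearly $O(n\log n)$ than the dynamic-hull machinery you sketch.
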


\begin{proof}
Let $G$ be the graph whose vertices are all the islands of $S$ with $k$ points, two of which are adjacent in $G$ if their symmetric difference is a pair of points. 
We look for a path of linear length from $I$ to $J$ in $G$. 
Without loss of generality, assume that no two points of $S$ have the same $x$-coordinate.
We sort the points in $S$ by their $x$-coordinate. 
The subsets of $S$ consisting of $k$ consecutive points in this ordering are islands of $S$ (thus vertices of $G$).
Let $G'$ be the subgraph of $G$ induced by these islands.
Note that $G'$ is a path of length at most $n$, and can be computed in $O(n\log n)$ time.  
To complete the proof, we show how to compute a path of linear length from $I$ to a vertex of $G'$ in $O(n \log n)$ time. 
A path from $J$ to a vertex of $G'$ can be computed in a similar way. 

Compute the convex hull, $C$, of $I$ in $O( n\log n)$ time.  
Let $S'$ be the points of $S \setminus I$ that lie in the vertical strip between the leftmost and rightmost point of $I$. 
Initialize a priority min-queue $Q$ with the vertices of $S'$.  
Store the points in $Q$ according to their shortest distance to $C$. 
This distance can be computed in $O(\log n)$ time per point, by doing a binary search on  $C$.  
Set $I_1:=I$. 

Assume that $I_i$ has  been computed and that $p_i$ is its rightmost point.  
We extract points from $Q$ until we find a point $q$ that is to the left of $p_i$. 
Note that $q$ is the point to the left of $p_i$ closest to $C$. Set $I_{i+1}:=I_i \cup \{q\} \setminus{p_i}$. 
The rightmost point $p_{i+1}$ of $I_{i+1}$ can be computed in constant time, since it is either $q$ or the first point of $I$ to the left of $p_i$.  
If $Q$ is empty or no such point is found then $I_i$ is an island of $G'$ and we are done.  
Note that by construction the symmetric difference between $I_i$ and $I_{i+1}$ is a pair of points. 
It remains to show that the $I_i$'s are islands of $S$.

Suppose that some $I_i$ is not an island of $S$. 
Then there exists a point $p \in S \setminus I_i$ contained in the convex hull of $I_i$.  
By Caratheodory's theorem there exist three points $q_1, q_2$ and $q_3$ of $I_i$ that contain $p$ in their convex hull. 
One of these points, say $q_1$, is farther away from $C$ than $p$.  
In particular $q_1$ is not in $I$. Therefore, $q_1$ was added to create some $I_j$ with $j<i$.  
When $I_j$  was created, $q_1$ was chosen because it was the point of $S'\setminus I$ closest to $C$ and to the left of $p_{j-1}$. 
This is a contradiction since $p$ is also a point of $S'\setminus I$ to the left of $p_{j-1}$, and it is closer to $C$ than $q_1$.
\end{proof}

The algorithm to find a balanced wedge in Theorem~\ref{thm:hobby_alg} computes a set of candidate apices, such that
one of them is guaranteed to be the apex of a balanced wedge. 
The set of candidates however is quite large; it has size $\Theta(n^4)$. 
If $\lceil \alpha r\rceil + \lceil \alpha b \rceil$ is not too large, we can somehow reduce this set of candidates 
to a single point $p$. This point has the property that either there is a balanced convex wedge with
apex $p$, or there exists both a negative convex wedge with apex $p$ and a positive convex wedge with
apex $p$. In the latter case we can apply Lemma~\ref{lem:island_graph}.
This point $p$ is given by the following lemma.

\begin{lemma}\textbf{(\cite{ceder, ceder_alg})}\label{lem:ceder}
There exist three lines, concurrent at a point $p$, that divide the plane into six open regions, 
		with the property that every region contains at least $\frac{1}{6}n-1$ points of $S$. 
Moreover, $p$ can be found in $O(n \log n)$ time.
\end{lemma}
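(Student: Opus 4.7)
The plan has two parts: an existence proof via a known topological equipartition theorem, and a direct $O(n \log n)$ construction.

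For existence, I would invoke Ceder's classical three-line equipartition theorem: any absolutely continuous Borel measure on $\mathbb{R}^2$ admits three concurrent lines dividing the plane into six sectors of equal measure. (This is a prototypical application of the CS-TM scheme described in Section~\ref{sec:topo}.) The transfer from measures to the point set $P$ follows the disk-inflation and limit argument already used in the proof of Lemma~\ref{lem:gen}: replace each point of $P$ by an open disk of radius $\varepsilon > 0$, apply Ceder's theorem to the resulting area measure to obtain three concurrent lines at an apex $p_\varepsilon$, and pass to a convergent subsequence as $\varepsilon \to 0$, using compactness of the space of $3$-stars with apex in a bounded region containing $\conv(P)$. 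The limit $3$-star meets at a point $p$. Since $P$ is in general position (no three collinear), each of the three lines through $p$ contains at most two points of $P$, so at most six points of $P$ lie on the boundary of the partition. An arbitrarily small perturbation of $p$ moves these boundary points into sector interiors without disturbing the classification of the rest of $P$; combined with the equipartition of sector areas in the limit, each of the six open sectors then contains at least $\lfloor n/6 \rfloor \ge \frac{1}{6}n - 1$ points of $P$.

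For the algorithmic part I plan to compute $p$ directly in $O(n \log n)$ time without traversing the topological argument. First compute a centerpoint $c$ of $P$ in linear time (Jadhav and Mukhopadhyay); $c$ has the defining property that every line through it leaves at most $\lceil 2n/3 \rceil$ points of $P$ on each side. Sort $P$ angularly around $c$ in $O(n \log n)$ time. Parametrize a $3$-star at $c$ by the direction $\theta \in [0,\pi)$ of one of its three lines, with the other two lines determined by a balance condition on opposite pairs of sectors. The vector of six sector counts is then a piecewise-constant function of $\theta$ with $O(n)$ breakpoints, one for each angular position of a point of $P$. Sweeping $\theta$ once over $[0,\pi)$ and maintaining the six counts incrementally takes $O(n)$ additional time after the sort, for a total of $O(n \log n)$.

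The main obstacle will be a clean combinatorial guarantee that some $\theta$ encountered in this sweep yields a balanced $3$-star at the specific apex $c$ (not merely at the unknown apex produced by the topological argument). The idea is a discrete intermediate-value argument: rotating $\theta$ by $\pi/3$ cyclically permutes the labels of the six sectors, so the discrepancy vector (six counts minus $n/6$) permutes cyclically as well; since this vector changes by $\pm 1$ steps at breakpoints and its components sum to zero, some intermediate $\theta$ forces every component to within $1$ of zero. The centerpoint's $2n/3$ slack property is exactly what is needed to prevent any sector from being squeezed out during the sweep, closing the argument.
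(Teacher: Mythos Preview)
The paper does not prove this lemma at all: immediately after the statement it simply cites Ceder~\cite{ceder} (via Buck--Buck~\cite{buck}) for the existence of the three concurrent lines and Sambuddha and Steiger~\cite{ceder_alg} for the $O(n\log n)$ algorithm. Your existence sketch is essentially the same citation dressed up with the standard disk-inflation limit, so that part is fine and aligned with the paper.

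Your algorithmic part, however, does not match the cited algorithm and has a genuine gap. You fix the apex at a linear-time \emph{centerpoint} $c$ and then search only over $3$-stars through $c$. But nothing you have written guarantees that $c$ is a valid apex for a six-way near-equipartition. For a generic absolutely continuous measure the Ceder (sixpartite) point is \emph{unique}, so it cannot coincide with an arbitrary centerpoint; the defining inequality ``every halfplane through $c$ has at least $n/3$ points'' controls sums of three consecutive sectors, not individual sectors. Your intermediate-value sketch compounds the problem: the claim that ``rotating $\theta$ by $\pi/3$ cyclically permutes the labels of the six sectors'' holds only for the \emph{equiangular} $3$-star (lines at $\theta,\theta+\pi/3,\theta+2\pi/3$), which contradicts your earlier parametrization where ``the other two lines are determined by a balance condition.'' And even granting cyclic invariance of the discrepancy vector, the inference ``components sum to zero and change by $\pm 1$, hence some $\theta$ makes all of them within $1$ of zero'' is simply false without further structure---the loop traced in $\{\sum d_i=0\}$ can avoid the origin entirely. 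Concretely, take three clusters of $n/3$ points at angles $0^\circ,120^\circ,240^\circ$ around $c$: for every $\theta$ away from the boundary the equiangular $3$-star has three sectors with $n/3$ points and three empty ones. The Roy--Steiger algorithm the paper cites does \emph{not} fix the apex in advance; it searches for the apex as part of the computation. If you want a self-contained proof here, you need either to reproduce their argument or to supply a genuine proof that some explicitly computable point (not just any centerpoint) admits the balanced $3$-star.
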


The existence of the point $p$ given in Lemma~\ref{lem:ceder} was shown by Ceder~\cite{ceder} 
	using a theorem of Buck and Buck~\cite{buck} on equipartitions of convex sets in the plane. 
The algorithm for finding $p$ is due to Sambuddha and Steiger~\cite{ceder_alg}.
The point $p$ has the property that many of the wedges with apex $p$ and whose bounding rays pass through points of $S$ are convex. 
This is quantified in Lemma~\ref{lem:many_convex}.

\begin{lemma}\label{lem:many_convex}
Let $S$ be a set of $n$ points in the plane, and $p$ a point given by Lemma~\ref{lem:ceder}. 
Let $k<\frac{5}{12}n$ be  a positive integer.
If  $k < \frac{1}{3}n$ then all the wedges with apex $p$ that contain
$k$ points of $S$ and whose bounding rays pass through points of $S$ are convex; 
if $k \ge \frac{1}{3}n$ then at least $2n-3k-3$ of them are convex.
\end{lemma}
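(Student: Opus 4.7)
My plan is to sort the $n$ points of $P$ cyclically around $p$ as $p_1, \ldots, p_n$, so that each wedge with apex $p$, $k$ points, and bounding rays through points of $P$ corresponds to a unique cyclic arc $(p_i, \ldots, p_{i+k-1})$, giving $n$ wedges $W_1, \ldots, W_n$. The wedge $W_i$ is convex iff its angular span $L_i := \theta_{i+k-1} - \theta_i$ (measured counterclockwise) is at most $\pi$. Let $R_1, \ldots, R_6$ denote the six regions from Lemma~\ref{lem:ceder}, in cyclic order, with angular spans $\alpha_a$ and sizes $r_a \ge n/6-1$; since opposite regions are congruent, any three consecutive spans satisfy $\alpha_a + \alpha_{a+1} + \alpha_{a+2} = \pi$.

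The structural observation driving both cases is that any wedge of angular span $> \pi$ must fully contain at least two consecutive regions $R_j, R_{j+1}$ in its interior (otherwise it would be covered by three consecutive regions of total span only $\pi$). Counting the two boundary points lying in the partial regions $R_{j-1}, R_{j+2}$, a non-convex $W_i$ contains at least $2(n/6-1) + 2 = n/3$ points of $P$. Case 1 ($k < n/3$) follows immediately: a non-convex $W_i$ would force $k \ge n/3$, a contradiction, so every $W_i$ is convex.

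For Case 2 ($n/3 \le k < 5n/12$), the same structural inequality rules out wedges with three or more full regions, since that would force $k \ge 3(n/6-1) + 2 = n/2 - 1 > 5n/12$ for $n$ large. So each non-convex $W_i$ contains exactly two consecutive full regions $R_j, R_{j+1}$, giving six types indexed by $j$. I will parametrize the non-convex wedges of type $j$ by the position $t$ of $p_i$ within $R_{j-1}$'s ccw order; the constraint $|W_i \cap P| = k$ then determines the position of $p_{i+k-1}$ in $R_{j+2}$, and the non-convexity condition $\theta_{i+k-1} > \theta_i + \pi$ translates into an inequality comparing the position of $p_{i+k-1}$ in $R_{j+2}$ with the number of $R_{j+2}$-points preceding the antipodal direction of $p_i$.

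The main obstacle is the tight counting: a naive per-$j$ estimate gives $|\mathcal{N}_j| \le \max\{0, k - r_j - r_{j+1} - 1\}$, summing to $6k - 2n - 6$, roughly twice the target $3k - n + 3$. The refinement I expect will exploit that each of the three opposite region pairs $\{R_a, R_{a+3}\}$ serves as the partial-region pair for two distinct values of $j$ (namely $j = a+1$ with full regions $R_{a+1}, R_{a+2}$ and $j = a+4$ with full regions $R_{a+4}, R_{a+5}$), and will analyze both orientations jointly via a single $\pm 1$ walk $\delta_p$ on the merged angular sequence of $R_a$'s points interlaced with the antipodes of $R_{a+3}$'s points. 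Non-convexity for the two orientations corresponds to up-steps landing at high $\delta_p$ and down-steps landing at low $\delta_p$, respectively; a ballot-style bound then should give $|\mathcal{N}_{a+1}| + |\mathcal{N}_{a+4}| \le k - r_{a+1} - r_{a+2} - 1 + O(1)$, and summing over $a = 1, 2, 3$ together with $\sum_a r_a = n$ yields at most $3k - n + 3$ non-convex wedges, so at least $2n - 3k - 3$ are convex.
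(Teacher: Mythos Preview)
Your Case~1 argument is correct and matches the paper's. For Case~2 you correctly see that the naive per-type bound is off by a factor of two and that pairing opposite types $j$ and $j+3$ is the key, but the ballot-style step is a genuine gap.

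First, the bookkeeping does not close. Summing your claimed inequality $|\mathcal{N}_{a+1}|+|\mathcal{N}_{a+4}|\le k-r_{a+1}-r_{a+2}-1+O(1)$ over $a=1,2,3$ gives $3k-(r_2+2r_3+2r_4+r_5)-3+O(1)$, not $3k-n+3$; the region sizes do not cancel the way you need. The asymmetry in your bound (why $r_{a+1},r_{a+2}$ and not $r_{a+4},r_{a+5}$?) and the unspecified $O(1)$ both signal that the argument has not actually been carried out.

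More to the point, the ballot machinery is unnecessary because a much sharper and simpler fact holds. Set $t:=k-\tfrac{1}{3}n+1$ and, in each region $W_i$, let $P_i$ and $R_i$ be the first and last $t$ points in the cyclic order. The hypothesis $k<\tfrac{5}{12}n$ makes $P_i$ and $R_i$ disjoint. One checks directly that every wedge whose first point lies in $W_i\setminus R_i$ ends inside $W_i\cup W_{i+1}\cup W_{i+2}$ and is therefore convex; so any non-convex wedge starts in some $R_i$ and ends in $P_{i+3}$. Now take any wedge $W$ starting in $R_i$ and any wedge $W'$ starting in $R_{i+3}$. Since $P_{i+3}$ precedes $R_{i+3}$ inside $W_{i+3}$ and $P_i$ precedes $R_i$ inside $W_i$, the arcs $[\text{start}(W),\text{end}(W)]$ and $[\text{start}(W'),\text{end}(W')]$ are disjoint on the circle, and two disjoint arcs cannot both have angular length exceeding $\pi$. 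Hence for each of the three pairs $(i,i+3)$, \emph{all} wedges starting in one of $R_i,R_{i+3}$ are convex. This yields at least $t$ convex wedges per pair, and together with the $n-6t$ wedges starting outside every $R_j$ (all convex) one gets at least $n-3t=2n-3k-3$ convex wedges.

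So your instinct to pair $j$ with $j+3$ is exactly right, but the conclusion is far stronger than a ballot-type inequality: one side of each pair is entirely convex, and the count falls out immediately.
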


\begin{proof}
The six regions in Lemma~\ref{lem:ceder} are all wedges with apex $p$. 
Let $W_0,\dots,W_5$ be these wedges sorted clockwise around $p$. 
Note that since each $W_i$ contains at least $\frac{1}{6}n-1$ points of $S$, any wedge with apex $p$ 
	that contains less than $\frac{1}{3}n$ points is contained in the union of at most three consecutive $W_i$'s.
Thus any such wedge is convex.

Assume that $\frac{1}{3}n \le k < \frac{5}{12}n$ and set $t:=k-\frac{1}{3}n+1$. 
Sort the points of $S$ clockwise by angle around $p$. 
Let $P_i$ and $R_i$ be the first and last $t$ points of $W_i$, respectively, in this order. 
Let $W$ be a wedge with apex $p$, containing $k$ points of $S$ and whose bounding rays pass through points of $S$. 
Note that if the first vertex of $W$ lies in $W_i\setminus{R_i}$ then its last vertex lies in $W_{i+1} \cup W_{i+2}$ (modulo 6).
In this case $W$ is convex. 
Therefore, for $W$ to be non-convex, its first vertex must be in some $R_i$. 
Now, if the first vertex of $W$ is in $R_i$ then its last vertex is in $P_{i+3}$ (modulo 6). 
Let $V_i$ be the set of wedges with apex $p$ that contain $k$ points of $S$, 
	whose bounding rays pass through points of $S$, and whose first vertex is in $R_i$. 
Since $k \le \frac{5}{12}n$, $P_i$ and $R_i$ are disjoint.  
Therefore, all the wedges in $V_i$ are convex or all the wedges in $V_{i+3}$ (modulo 6) are convex. 
This gives $3t$ extra convex wedges.
Summarizing, we have $n-6t$ convex wedges not starting at some $R_i$ plus at least $3t$ extra convex wedges starting at some $R_i$. 
So in total we have at least $2n-3k-3$ convex wedges with apex $p$ that contain $k$ points of $S$.
\end{proof}

\begin{theorem}\label{thm:alg_2}
Let $S$ be a set of $r$ red and $b$ blue points in the plane and 
let $\alpha \in \left [ 0,\frac{1}{2} \right ]$ be such that  
	\[\lceil \alpha r\rceil + \lceil \alpha b \rceil < \frac{1}{3}n+ \frac{2}{3}\sqrt{n+r\lceil \alpha b \rceil-b\lceil \alpha r \rceil}-\frac{4}{3}.\]
Then an island containing exactly $\lceil \alpha r\rceil$ red and exactly $\lceil \alpha b \rceil$ blue points of $S$ can be found in $O( n\log n)$ time.
\end{theorem}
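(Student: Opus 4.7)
The plan is to use Lemma~\ref{lem:ceder} to fix a single candidate apex $p$, enumerate only the $n$ wedges with apex $p$ whose bounding rays pass through cyclically consecutive pairs of points in $P$, and then bridge a ``positive'' and a ``negative'' convex wedge via Lemma~\ref{lem:island_graph} when no directly balanced convex wedge is found.

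First I would compute $p$ and sort $P$ clockwise around $p$, both in $O(n \log n)$ time. Let $k := \lceil \alpha r\rceil + \lceil \alpha b\rceil$ and let $W_1,\dots,W_n$ be the wedges with apex $p$ each containing $k$ cyclically consecutive points of $P$. Using the sliding-window update from the proof of Lemma~\ref{lem:wedge_alg}, the red count $c_i$ of each $W_i$ together with its convexity status can be maintained in $O(n)$ total time. If some convex $W_i$ satisfies $c_i = \lceil \alpha r\rceil$, then $W_i \cap P$ is an island with the required color split and the algorithm returns it.

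Otherwise a counting argument must produce a convex wedge $W^+$ with $c_i \ge \lceil \alpha r\rceil + 1$ and a convex wedge $W^-$ with $c_i \le \lceil \alpha r\rceil - 1$. Since every point of $P$ lies in exactly $k$ of the $n$ wedges, $\sum_i c_i = kr$ and $\sum_i (k-c_i) = kb$. By Lemma~\ref{lem:many_convex} the number $C$ of convex wedges satisfies $C \ge 2n - 3k - 3$ (and $C = n$ when $k < n/3$, a case already handled by the same inequality). If \emph{every} convex wedge were positive, then
$(2n-3k-3)(\lceil \alpha r\rceil + 1) \le C(\lceil \alpha r\rceil + 1) \le \sum_i c_i = kr$,
while the analogous assumption ``all convex wedges are negative'' applied to the blue counts $k-c_i$ yields $(2n-3k-3)(\lceil \alpha b\rceil + 1) \le kb$. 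Substituting $k=\lceil \alpha r\rceil + \lceil \alpha b\rceil$ and $n=r+b$, each of these reduces to a quadratic inequality whose discriminant is precisely $n + r\lceil \alpha b\rceil - b\lceil \alpha r\rceil$ (up to the symmetric renaming for the blue case), so the theorem's hypothesis $k < \tfrac{1}{3}n + \tfrac{2}{3}\sqrt{n + r\lceil \alpha b\rceil - b\lceil \alpha r\rceil} - \tfrac{4}{3}$ is exactly what rules out both scenarios and guarantees the existence of $W^+$ and $W^-$.

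Finally I would apply Lemma~\ref{lem:island_graph} to $I := W^+ \cap P$ and $J := W^- \cap P$ to obtain in $O(n \log n)$ time a sequence of $m = O(n)$ islands $I = I_1, I_2, \dots, I_m = J$, each of size $k$, with $|I_t \triangle I_{t+1}| = 2$. The weights of size-$k$ islands form the arithmetic progression $\{x/r - (k-x)/b : 0 \le x \le k\}$ with common step $1/r + 1/b$, and a single swap changes the red count by at most one, so the weight shifts by $0$ or $\pm(1/r+1/b)$ between consecutive islands. Starting strictly above the target $\lceil \alpha r\rceil/r - \lceil \alpha b\rceil/b$ and ending strictly below, the sequence must land on the target at some $I_t$, which is the desired balanced island. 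The hard part will be verifying that the two quadratic inequalities from the two ``all on one side'' cases are jointly implied by the single square-root condition in the hypothesis; modulo that algebraic check, the total running time is dominated by Lemmas~\ref{lem:ceder} and~\ref{lem:island_graph}, giving the claimed $O(n \log n)$ bound.
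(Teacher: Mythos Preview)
Your overall architecture matches the paper exactly: find the Ceder point $p$ via Lemma~\ref{lem:ceder}, sort $P$ around $p$, scan the $n$ wedges of size $k$, and if none is balanced, exhibit a convex positive and a convex negative wedge and bridge them with Lemma~\ref{lem:island_graph}. The final discrete intermediate-value step is also fine.

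The gap is in the counting step that is supposed to produce $W^+$ and $W^-$. Your inequality
\[
(2n-3k-3)(\lceil\alpha r\rceil+1)\;\le\;C(\lceil\alpha r\rceil+1)\;\le\;\sum_i c_i=kr
\]
uses only two facts: $C\ge 2n-3k-3$ and $c_i\ge 0$ on the non-convex wedges. This throws away the key structural property the paper exploits, namely that the weights of \emph{consecutive} wedges differ by at most $\delta=\tfrac1r+\tfrac1b$. The paper uses this Lipschitz property, together with $\sum_i w_i=0$, to show that the number $P$ of non-negative wedges satisfies $P\ge 2\sqrt{n+r\lceil\alpha b\rceil-b\lceil\alpha r\rceil}-1$; the hypothesis on $k$ is precisely what forces this to exceed the at most $3k+3-n$ non-convex wedges. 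That is where the $\sqrt{\,\cdot\,}$ in the statement comes from. Your averaging bound has no square root in it, and in fact only yields a threshold of the form $k<\tfrac{n}{3}+O(1)$ rather than $k<\tfrac{n}{3}+\Theta(\sqrt{n})$. Concretely, take $r=b=n/2$ and $a:=\lceil\alpha r\rceil=\lceil\alpha b\rceil$, so $k=2a$; for, say, $n=900$ and $k=318$ the theorem's hypothesis $k<\tfrac{n}{3}+\tfrac{2}{3}\sqrt{n}-\tfrac{4}{3}\approx 318.7$ holds, yet
\[
(2n-3k-3)(a+1)=843\cdot 160=134880<143100=kr,
\]
so your inequality does \emph{not} give a contradiction and you cannot rule out ``all convex wedges positive''. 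The ``algebraic check'' you flag as the hard part is therefore not a check at all: the claimed discriminant $n+r\lceil\alpha b\rceil-b\lceil\alpha r\rceil$ does not arise from your inequality, and the missing ingredient is the $\delta$-Lipschitz argument on consecutive wedge weights.
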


\begin{proof}
Let $k:=\lceil \alpha r\rceil + \lceil \alpha b \rceil$.
Find $p$ as in Lemma~\ref{lem:ceder} in $O( n\log n)$ time. 
We compute the set of wedges, $\mathcal{W}$ that have apex $p$, whose bounding rays pass through points of $S$, and that contain $k$ points of $S$ in $O( n\log n)$ time.
While computing the wedges $\mathcal{W}$, we also compute their weights.
We are done if there is a convex wedge of weight $z:=\lceil \alpha r\rceil/r - \lceil \alpha b\rceil/b$ in $\mathcal{W}$. 
So assume that every convex wedge in $\mathcal{W}$ has weight greater or less than $z$ (that is, has positive or negative weight). 

We show that $\mathcal{W}$ contains a convex negative wedge and a convex positive wedge.
Afterwards, we apply Lemma~\ref{lem:island_graph} to find a balanced island of exactly $\lceil \alpha r\rceil$ red 
	and exactly $\lceil \alpha b \rceil$ blue points of $S$ in  $O( n\log n)$ time.
We only give the proof that $\mathcal{W}$ contains a positive convex wedge. 
The proof that it contains a negative convex wedge is similar. 
 
Let $P$ be the number of non-negative wedges in $\mathcal{W}$ and let $M$  be the sum of the weights of the non-negative wedges.
Note that the sum over all wedges in $\mathcal{W}$ is zero as every point appears in the same number of wedges. 
Therefore, the sum of the weights of the negative wedges is equal to $-M$. 
In particular, this implies that there exist both negative and positive wedges.  
If $k < \frac{1}{3} n$ then by Lemma~\ref{lem:many_convex} all the wedges in $\mathcal{W}$ are convex and we are done.  
Assume that $k \ge \frac{1}{3} n$.

Note that the weight of two consecutive wedges in $\mathcal{W}$ differs by at most $\delta:=\frac{1}{r}+\frac{1}{b}$. 
For $M$ to achieve its largest possible value, the non-negative wedges must lie consecutively as follows.  
The first wedge has weight $z$.
Subsequent wedges increase in weight by $\delta$ until they reach a maximum. 
Afterwards, subsequent wedges decrease in weight by $\delta$ until they reach $z$ again. 
Depending on whether $P$ is even or odd the sequence will stay at this maximum value for one or two wedges of the sequence. 
In both cases, simple arithmetic shows that $M \le \frac{1}{4}\delta(P^2-2P+1)+Pz$. 

The largest possible negative weight is $z-\delta$. 
Therefore the sum of the negative weights is at most $(z-\delta)(n-P)$. 
Thus, $M \ge  (\delta -z)(n-P)$, and 
		\[\frac{1}{4}\delta(P^2-2P+1)+Pz-(\delta-z)(n-P)\ge 0.\] 
Solving for $P$, we have that  
		\[P \ge 2\sqrt{n-\frac{nz}{\delta}}-1.\] 
Given that $n=r+b$, $\delta=\frac{1}{r}+\frac{1}{b}$ and $z=\lceil \alpha r\rceil/r - \lceil \alpha b\rceil/b$, this implies that 
	\[ P \ge 2 \sqrt{n+r\lceil \alpha b \rceil-b\lceil \alpha r \rceil}-1.\] 
By Lemma~\ref{lem:many_convex}, at least $2 n-3k-3$ of the wedges of $\mathcal{W}$ are convex. 
Thus, since $k < \frac{1}{3} n+ \frac{2}{3}\sqrt{n+r\lceil \alpha b \rceil-b\lceil \alpha r \rceil}-\frac{4}{3}$, 
	the number of non-convex wedges of $\mathcal{W}$ is at most $3k+3- n<2\sqrt{n-b\lceil \alpha r \rceil-r\lceil \alpha b \rceil}-1 \le P$.  
Therefore, at least one of the convex wedges of $\mathcal{W}$ is non-negative.
Since all balanced wedges are non-convex by assumption, this wedge must be positive and the result follows.
\end{proof}

\subsection*{Acknowledgments}

We thank Pablo Sober\'on for pointing out that the Theorem of Blagojevi\'c and Dimitrijevi\'c Blagojevi\'c
implies the first case of Theorem~\ref{thm:hobby}.

\small 
\bibliographystyle{abbrv}
\bibliography{balancedbib}

\end{document}